\def\margin{2cm}
\title{Rainbow Colouring of Split Graphs}
\date{}
\author[1]{L.~Sunil~Chandran}
\author[2]{Deepak~Rajendraprasad \footnote{Supported by VATAT Post-doctoral Fellowship, Council of Higher Education, Israel.}}
\author[3]{Marek Tesa\v{r}\footnote{Supported by Charles University by the grant SVV-2014–260103}}
\affil[1]{
	Department of Computer Science and Automation, \authorcr 
	Indian Institute of Science, Bangalore, India - 560012. \authorcr
	\texttt{sunil@csa.iisc.ernet.in}
}
\affil[2]{
	The Caesarea Rothschild Institute, Department of Computer Science, \authorcr
	University of Haifa, 31095, Haifa, Israel. \authorcr
	\texttt{deepakmail@gmail.com}
}
\affil[3]{
	Department of Applied Mathematics, Faculty of Mathematics and Physics, \authorcr
	Charles University, Prague, Czech Republic. \authorcr
	\texttt{tesar@kam.mff.cuni.cz}
}
\theoremstyle{plain}
\newtheorem{theorem}{Theorem}
\newtheorem{lemma}[theorem]{Lemma}
\newtheorem{corollary}[theorem]{Corollary}
\theoremstyle{definition}
\newtheorem{definition}[theorem]{Definition}
\newtheorem{problem}{Problem}
\newtheoremstyle{cases}% name of the style to be used
  {}% measure of space to leave above the theorem. E.g.: 3pt
  {}% measure of space to leave below the theorem. E.g.: 3pt
  {}% name of font to use in the body of the theorem
  {}% measure of space to indent
  {}% name of head font
  {\newline}% punctuation between head and body
  {0.5em}% space after theorem head
  {{\itshape \thmname{#1}} \thmnumber{#2} ({\itshape\thmnote{#3}}).\medskip}% Manually specify head
\theoremstyle{cases}
\newtheorem{case}{Case}
\newtheorem*{claim}{Claim}
\def\into{\rightarrow}
\def\-{\mbox{--}}
\def\G{\mathcal{G}}
\def\S{\mathcal{S}}
\def\R{\mathbb{R}}
\def\Z{\mathbb{Z}}
\DeclareMathOperator{\rc}{\textsc{RainbowColour}}
\DeclareMathOperator{\bcc}{\textsc{BicliqueCover}}
\DeclareMathOperator{\sat}{3\textsc{-Sat}}
\DeclareMathOperator{\pen}{pen}
\DeclareMathOperator{\then}{then}
\newcommand{\Edge}[1]{\stackrel{#1}{\mbox{---\negthinspace---}}}
\newcommand{\Path}[2]{#1 \textnormal{ to } #2}
\newcommand{\PrintColours}[1]{{\small \color{red} #1}}
\def\Obr#1#2#3#4%
\def\GOneOneOne{
\begin{pspicture}(0,0)(4,3)
	\pspolygon(2,2)(1.4,1)(2.6,1)
	\psline[showpoints=true,dotsize=5pt](2,2)(2,3)
	\psline[showpoints=true,dotsize=5pt](1.4,1)(0.8,0)
	\psline[showpoints=true,dotsize=5pt](2.6,1)(3.2,0)
	\uput[r](2,2){$x_0$}
	\uput[r](2.6,1){$x_1$}
	\uput[l](1.4,1){$x_2$}
	\uput[u](2,3){$y_0$}
	\uput[dr](3.2,0){$y_1$}
	\uput[dl](0.8,0){$y_2$}
	\uput[u](2,-1){$G_{111}$}
\end{pspicture}
}
\def\GFourZeroZero{
\begin{pspicture}(0,0)(4,3)
	\pspolygon[showpoints=true,dotsize=5pt](2,2)(1.4,1)(2.6,1)
	\psline[showpoints=true,dotsize=5pt](2,2)(1.1,3)
	\psline[showpoints=true,dotsize=5pt](2,2)(1.7,3)
	\psline[showpoints=true,dotsize=5pt](2,2)(2.3,3)
	\psline[showpoints=true,dotsize=5pt](2,2)(2.9,3)
	\uput[r](2,2){$x_0$}
	\uput[r](2.6,1){$x_1$}
	\uput[l](1.4,1){$x_2$}
	\uput[u](1.1,3){$y_0$}
	\uput[u](1.7,3){$y_1$}
	\uput[u](2.3,3){$y_2$}
	\uput[u](2.9,3){$y_3$}
	\uput[u](2,-1){$G_{400}$}
\end{pspicture}
}
\def\GThreeOneZero{
\begin{pspicture}(0,0)(4,3)
	\pspolygon[showpoints=true,dotsize=5pt](2,2)(1.4,1)(2.6,1)
	\psline[showpoints=true,dotsize=5pt](2,2)(2,3)
	\psline[showpoints=true,dotsize=5pt](2,2)(1.4,3)
	\psline[showpoints=true,dotsize=5pt](2,2)(2.6,3)
	\psline[showpoints=true,dotsize=5pt](2.6,1)(3.2,0)
	\uput[r](2,2){$x_0$}
	\uput[r](2.6,1){$x_1$}
	\uput[l](1.4,1){$x_2$}
	\uput[u](1.4,3){$y_0$}
	\uput[u](2,3){$y_1$}
	\uput[u](2.6,3){$y_3$}
	\uput[dr](3.2,0){$y_2$}
	\uput[u](2,-1){$G_{310}$}
\end{pspicture}
}
\def\GTwoTwoZeroZero{
\begin{pspicture}(0,0)(3,3)
	\pspolygon[showpoints=true,dotsize=5pt](1,2)(2,2)(2,1)(1,1)
	\psline(2,2)(1,1)(1,2)(2,1)
	\psline[showpoints=true,dotsize=5pt](1,2)(1,3)
	\psline[showpoints=true,dotsize=5pt](1,2)(0.4,3)
	\psline[showpoints=true,dotsize=5pt](2,2)(2,3)
	\psline[showpoints=true,dotsize=5pt](2,2)(2.6,3)
	\uput[l](1,2){$x_0$}
	\uput[r](2,2){$x_1$}
	\uput[r](2,1){$x_2$}
	\uput[l](1,1){$x_3$}
	\uput[u](0.4,3){$y_0$}
	\uput[u](1,3){$y_2$}
	\uput[u](2,3){$y_1$}
	\uput[u](2.6,3){$y_3$}
	\uput[u](1.5,-1){$G_{2200}$}
\end{pspicture}
}
\def\GTwoTwoZero{
\begin{pspicture}(0,0)(4,3)
	\pspolygon[showpoints=true,dotsize=5pt](2,2)(1.4,1)(2.6,1)
	\psline[showpoints=true,dotsize=5pt](2,2)(1.7,3)
	\psline[showpoints=true,dotsize=5pt](2,2)(2.3,3)
	\psline[showpoints=true,dotsize=5pt](2.6,1)(3.0,-0.2)
	\psline[showpoints=true,dotsize=5pt](2.6,1)(3.4,0.2)
	\uput[r](2,2){$x_0$}
	\uput[r](2.6,1){$x_1$}
	\uput[l](1.4,1){$x_2$}
	\uput[u](1.7,3){$y_0$}
	\uput[u](2.3,3){$y_1$}
	\uput[r](3.4,+0.2){$y_2$}
	\uput[l](3.0,-0.2){$y_3$}
\end{pspicture}
}
\def\GTwoTwoZeroPrime{
\begin{pspicture}(0,0)(4,3)
	\pspolygon[showpoints=true,dotsize=5pt](2,2)(1.4,1)(2.6,1)
	\psline[showpoints=true,dotsize=5pt](2,2)(1.7,3)
	\psline[showpoints=true,dotsize=5pt](2,2)(2.3,3)
	\psline[showpoints=true,dotsize=5pt](2.6,1)(3.0,-0.2)
	\psline[showpoints=true,dotsize=5pt](2.6,1)(3.4,0.2)
	\psline[showpoints=true,dotsize=5pt](2,2)(3.2,2)(2.6,1)
	\uput[l](2,2){$x_0$}
	\uput[r](2.6,1){$x_1$}
	\uput[l](1.4,1){$x_2$}
	\uput[u](1.7,3){$y_0$}
	\uput[u](2.3,3){$y_1$}
	\uput[r](3.4,+0.2){$y_2$}
	\uput[l](3.0,-0.2){$y_3$}
	\uput[r](3.2,2){$z$}
\end{pspicture}
}
\newcommand\BaseGraph{
	% Clique
	\pspolygon[fillstyle=solid, fillcolor=black!20!white]
		(0,2)(-0.4,0)(0,-0.4)(2,0)(0.3,0.3)
 	\pscircle[fillstyle=solid, fillcolor=black!20!white](0,0){0.5}
 	\psline[showpoints=true](0,2)(2,0)
 	\uput[dl](0,2){$K_0$}
 	\uput[dr](2,0){$K_1$}
 	\uput[dl](-0.2,-0.2){$K_2$}
	% Degree-2 independent set vertices
 	\psline[showpoints=true]
		(-0.3,0.3)(-2,2)(0,2)(2,2)(2,0)(2,-2)(0.3,-0.3)	
  	\psline[showpoints=true]
 		(-0.3,-0.3)(0,-2)(+0.3,-0.3)
 	\pscircle[linestyle=dashed]( 2,2){0.3} \rput{0}(2.4,2.4){$I_{0,1}$}
 	\pscircle[linestyle=dashed](2,-2){0.3} \rput{0}(2.6,-2){$I_{1,2}$}
 	\pscircle[linestyle=dashed](-2,2){0.3} \rput{0}(-2,2.5){$I_{2,0}$}
 	\pscircle[linestyle=dashed](0,-2){0.3} \rput{0}(-0.7,-2){$I_{2,2}$}
	% Edge colours
	\PrintColours{
		% Clique
		\rput{0}(0.85,0.85){$\ColoursK(1)$}
		\rput{0}(0.8,0){$\ColoursK(2)$}
		\rput{0}(0,0.8){$\ColoursK(3)$}
		\rput{0}(0,0){$\ColoursK(4)$}
		% Degree-2 independent set
		\rput{0}(1.3,1.8){$\ColoursI(1)$}
		\rput{0}(1.8,1.3){$\ColoursI(2)$}
		\rput{0}(1.8,-1){$\ColoursI(3)$}
		\rput{0}(1.3,-1){$\ColoursI(4)$}
		\rput{0}(-1,1.3){$\ColoursI(5)$}
 		\rput{0}(-1,1.8){$\ColoursI(6)$}
 		\rput{0}(-0.3,-1.4){$\ColoursI(7)$}
 		\rput{0}(+0.3,-1.4){$\ColoursI(8)$}
	}
}
\def\FigureColourGOneOneOne{
	\begin{pspicture}(-2.5,-2.5)(3.5,3.5)
		\psset{dotsize=3pt}
		% Clique and degree-two independent set vertices
		\readarray{ColoursK}{2 & 0 & 1 & 1}
		\readarray{ColoursI}{0 & 1 & 1 & 2 & 2 & 0 & 0 & 2}
		\BaseGraph
		% Pendant vertices
	  	\psline[showpoints=true](0,2)(0,3) \uput[u](0,3){$y_0$}
	  	\psline[showpoints=true](2,0)(3,0) \uput[r](3,0){$y_1$}
	  	\psline[showpoints=true](-1.4,0)(-0.4,0) \uput[l](-1.4,0){$y_2$}
		% Colours to pendant edges
		\PrintColours{
			\rput{0}(0.2,2.5){$0$}
			\rput{0}(2.5,0.2){$1$}
			\rput{0}(-1,0.2){$2$}
		}
	\end{pspicture}
}
\def\FigureColourGFourZeroZero{
	\begin{pspicture}(-2.5,-2.5)(3.5,3.5)
	\psset{unit=1cm}
		\psset{dotsize=3pt}
		% Clique and degree-two independent set vertices
		\readarray{ColoursK}{3 & 0 & 1 & 1}
		\readarray{ColoursI}{0 & 1 & 1 & 2 & 2 & 0 & 0 & 2}
		\BaseGraph
		% Pendant vertices
	  	\psline[showpoints=true](0,2)(-1.5,3) \uput[u](-1.5,3){$y_0$}
	  	\psline[showpoints=true](0,2)(-0.5,3) \uput[u](-0.5,3){$y_1$}
	  	\psline[showpoints=true](0,2)(+0.5,3) \uput[u](+0.5,3){$y_2$}
	  	\psline[showpoints=true](0,2)(+1.5,3) \uput[u](+1.5,3){$y_3$}
		% Colours to pendant edges
		\PrintColours{
			\rput{0}(-1.4,2.7){$0$}
			\rput{0}(-0.5,2.7){$1$}
			\rput{0}(+0.5,2.7){$2$}
			\rput{0}(+1.4,2.7){$3$}
		}
	\end{pspicture}
}
\def\FigureColourGThreeOneZero{
	\begin{pspicture}(-2.5,-2.5)(3.5,3.5)
	\psset{unit=1cm}
		\psset{dotsize=3pt}
		% Clique and degree-two independent set vertices
		\readarray{ColoursK}{3 & 0 & 1 & 1}
		\readarray{ColoursI}{0 & 1 & 1 & 2 & 2 & 0 & 0 & 2}
		\BaseGraph
		% Pendant vertices
	  	\psline[showpoints=true](0,2)(-1.5,3) \uput[u](-1.5,3){$y_0$}
	  	\psline[showpoints=true](0,2)(-0.5,3) \uput[u](-0.5,3){$y_1$}
% 	  	\psline[showpoints=true](0,2)(+0.5,3) \uput[u](+0.5,3){$y_2$}
	  	\psline[showpoints=true](0,2)(+1.5,3) \uput[u](+1.5,3){$y_3$}
	  	\psline[showpoints=true](2,0)(3,0) \uput[r](3,0){$y_2$}
		% Colours to pendant edges
		\PrintColours{
			\rput{0}(-1.4,2.7){$0$}
			\rput{0}(-0.5,2.7){$1$}
% 			\rput{0}(+0.5,2.7){$2$}
			\rput{0}(+1.4,2.7){$3$}
			\rput{0}(2.5,0.2){$2$}
		}
	\end{pspicture}
}
\def\FigureColourGTwoTwoZeroZero{
	\begin{pspicture}(-2.5,-2.5)(3.5,3.5)
	\psset{unit=1cm}
		\psset{dotsize=3pt}
% 		\psgrid
		% Clique and degree-two independent set vertices
		% Clique
		\pspolygon[fillstyle=solid, fillcolor=black!20!white]
			(0,2)(-0.4,0)(0,-0.4)(2,0)(0.3,0.3)
		\pspolygon[fillstyle=solid, fillcolor=black!20!white]
	 		(+0.4,0)(2,2)(0,0.4)
	 	\pscircle[fillstyle=solid, fillcolor=black!20!white](0,0){0.5}
	 	\psline[showpoints=true](0,2)(2,2)(2,0)
		\psarc(0,0){2}{0}{90}
	 	\uput[ul](0,2){$K_0$}
	 	\uput[ur](2,2){$K_1$}
	 	\uput[dr](2,0){$K_2$}
	 	\uput[d](0,-0.4){$K_3$}
		% Degree-2 independent set vertices
	 	\psline[showpoints=true]
			(-0.35,0.2)(-2,1)(0,2)(1,4)(2,2)(4,1)(2,0)(1,-2)(0.2,-0.35)	
	  	\psline[showpoints=true]
	 		(-0.4,0)(-1.4,-1.4)(0,-0.4)
 	  	\psline[showpoints=true]{<->}%
			(3,3.5)(4,4)(3.5,3)
 	  	\psline[showpoints=true]{<->}%
			(3.5,-1)(4,-2)(3,-1.5)
	 	\pscircle[linestyle=dashed]( 1,4){0.3} \rput{0}(1.6,4){$I_{0,1}$}
	 	\pscircle[linestyle=dashed]( 4,1){0.3} \rput{0}(4,0.4){$I_{1,2}$}
	 	\pscircle[linestyle=dashed](1,-2){0.3} \rput{0}(1.7,-2){$I_{2,3}$}
	 	\pscircle[linestyle=dashed](-2,1){0.3} \rput{0}(-2,0.4){$I_{3,0}$}
	 	\pscircle[linestyle=dashed](-1.4,-1.4){0.3} \rput{0}(-1.4,-2){$I_{3,3}$}
	 	\pscircle[linestyle=dashed](4,4){0.3} \rput{0}(4.6,4){$I_{0,2}$}
	 	\pscircle[linestyle=dashed](4,-2){0.3} \rput{0}(4.6,-2){$I_{1,3}$}
		% Pendant vertices
	  	\psline[showpoints=true](0,2)(-1.5,2) \uput[l](-1.5,2){$y_0$}
	  	\psline[showpoints=true](0,2)(0,3.5) \uput[l](0,3.5){$y_2$}
	  	\psline[showpoints=true](2,2)(2,3.5) \uput[r](2,3.5){$y_1$}
	  	\psline[showpoints=true](2,2)(3.5,2) \uput[r](3.5,2){$y_3$}
		% Edge colours
		\readarray{ColoursK}{2 & 0 & 0 & 1 & 0 & 3 & 1}
		\readarray{ColoursI}{0 & 1 & 1 & 2 & 2 & 3 & 3 & 0 & 0 & 2 & 1 & 3 & 0 & 3}
		\PrintColours{
			% Clique
			\rput{0}(1.0,2.2){$\ColoursK(1)$}
			\rput{0}(2.2,1.0){$\ColoursK(2)$}
			\rput{0}(1,0){$\ColoursK(3)$}
			\rput{0}(0,1.0){$\ColoursK(4)$}
			\rput{0}(1.0,1.0){$\ColoursK(5)$}
			\rput{0}(1.7,0.7){$\ColoursK(6)$}
			\rput{0}(0,0){$\ColoursK(7)$}
			% Degree-2 independent set
			\uput[l](1,3){$\ColoursI(1)$}
			\uput[r](1,3){$\ColoursI(2)$}
			\uput[u](3,1){$\ColoursI(3)$}
			\uput[d](3,1){$\ColoursI(4)$}
			\uput[r](1,-1){$\ColoursI(5)$}
			\uput[l](1,-1){$\ColoursI(6)$}
			\uput[d](-1,1){$\ColoursI(7)$}
			\uput[u](-1,1){$\ColoursI(8)$}
			\uput[ul](3.6,3.6){$\ColoursI(9)$}
			\uput[dr](3.6,3.6){$\ColoursI(10)$}
			\uput[ur](3.6,-1.6){$\ColoursI(11)$}
			\uput[dl](3.6,-1.6){$\ColoursI(12)$}
			\uput[dr](-1,-1){$\ColoursI(13)$}
			\uput[ul](-1,-1){$\ColoursI(14)$}
			% Colours to pendant edges
			\rput{0}(-1.0,2.2){$0$}
			\rput{0}(-0.2,3.0){$2$}
			\rput{0}(+2.2,3.0){$1$}
			\rput{0}(+3.0,2.2){$3$}
		}
	\end{pspicture}
}
\def\FigureColourGTwoTwoZeroPrime{
	\begin{pspicture}(-2.5,-2.5)(3.5,3.5)
		\psset{dotsize=3pt}
		% Clique and degree-two independent set vertices
		\readarray{ColoursK}{0 & 1 & 3}
		\readarray{ColoursI}{1 & 2 & 3 & 2 & 1 & 0}
		% Clique
	 	\pspolygon[showpoints=true](0,2)(2,0)(0,0)
	 	\uput[dl](0,2){$x_0$}
	 	\uput[dl](2,0){$x_1$}
	 	\uput[dl](0,0){$x_2$}
		% Degree-2 independent set vertices
	 	\psline[showpoints=true]
			(0,0)(-2,2)(0,2)(2,2)(2,0)(2,-2)(0,0)	
	 	\pscircle[linestyle=dashed]( 2,2){0.3} \rput{0}(2.2,2.4){$\{z\} \cup I_{0,1}$}
	 	\pscircle[linestyle=dashed](2,-2){0.3} \rput{0}(2.6,-2){$I_{1,2}$}
	 	\pscircle[linestyle=dashed](-2,2){0.3} \rput{0}(-2,2.5){$I_{2,0}$}
		% Edge colours
		\PrintColours{
			% Clique
			\rput{0}(0.85,0.85){$\ColoursK(1)$}
			\rput{0}(0.85,0.2){$\ColoursK(2)$}
			\rput{0}(0.2,0.85){$\ColoursK(3)$}
			% Degree-2 independent set
			\rput{0}(1.3,1.8){$\ColoursI(1)$}
			\rput{0}(1.8,1.3){$\ColoursI(2)$}
			\rput{0}(1.8,-1){$\ColoursI(3)$}
			\rput{0}(1.3,-1){$\ColoursI(4)$}
			\rput{0}(-1,1.3){$\ColoursI(5)$}
	 		\rput{0}(-1,1.8){$\ColoursI(6)$}
		}
		% Pendant vertices
	  	\psline[showpoints=true](0,2)(-0.5,3) \uput[u](-0.5,3){$y_0$}
	  	\psline[showpoints=true](0,2)(+0.5,3) \uput[u](+0.5,3){$y_1$}
	  	\psline[showpoints=true](2,0)(3,+0.5) \uput[r](3,+0.5){$y_2$}
	  	\psline[showpoints=true](2,0)(3,-0.5) \uput[r](3,-0.5){$y_3$}
		% Colours to pendant edges
		\PrintColours{
			\rput{0}(-0.2,2.7){$0$}
			\rput{0}(+0.2,2.7){$1$}
			\rput{0}(2.7,+0.2){$2$}
			\rput{0}(2.7,-0.2){$3$}
		}
	\end{pspicture}
}
\begin{document}

\maketitle

\begin{abstract}

A {\em rainbow path} in an edge coloured graph is a path in which no two edges are coloured the same. A {\em rainbow colouring} of a connected graph $G$ is a colouring of the edges of $G$ such that every pair of vertices in $G$ is connected by at least one rainbow path. The minimum number of colours required to rainbow colour $G$ is called its {\em rainbow connection number}. Between them, Chakraborty et al. [J. Comb. Optim., 2011] and Ananth et al. [FSTTCS, 2012] have shown that for every integer $k$, $k \geq 2$, it is \NP-complete to decide whether a given graph can be rainbow coloured using $k$ colours.

A {\em split graph} is a graph whose vertex set can be partitioned into a clique and an independent set. Chandran and Rajendraprasad have shown that the problem of deciding whether a given split graph $G$ can be rainbow coloured using $3$ colours is \NP-complete and further have described a linear time algorithm to rainbow colour any split graph using at most one colour more than the optimum [COCOON, 2012]. In this article, we settle the computational complexity of the problem on split graphs and thereby discover an interesting dichotomy. Specifically, we show that the problem of deciding whether a given split graph can be rainbow coloured using $k$ colours is \NP-complete for $k \in \{2,3\}$, but can be solved in polynomial time for all other values of $k$.
\end{abstract}

\noindent {\bf Keywords:} rainbow connectivity, rainbow colouring, split graphs, complexity.

%% Introduction

\section{Introduction}

An {\em edge colouring} of a graph is a function from its edge set to the set of natural numbers. A path in an edge coloured graph with no two edges sharing the same colour is called a {\em rainbow path}. An edge coloured graph is said to be {\em rainbow connected} if every pair of vertices is connected by at least one rainbow path. Such a colouring is called a {\em rainbow colouring} of the graph. A rainbow colouring using minimum possible number of colours is called {\em optimal}. The minimum number of colours required to rainbow colour a connected graph $G$ is called its {\em rainbow connection number}, denoted by $rc(G)$. For example, the rainbow connection number of a complete graph is $1$, that of a path is its length, that of an even cycle is half its length, and that of a tree is its number of edges. Note that disconnected graphs cannot be rainbow coloured and hence their rainbow connection number is left undefined. Any connected graph can be rainbow coloured by giving distinct colours to the edges of a spanning tree of the graph. Hence the rainbow connection number of any connected graph is less than its number of vertices. It is trivial to see that that $rc(G)$ is at least the diameter of $G$. It is easy to see that no two bridges in a graph can get the same colour under a rainbow colouring and hence $rc(G)$ is lower bounded by the number of bridges in the $G$.

The concept of rainbow colouring was introduced by Chartrand, Johns, McKeon, and Zhang  in \cite{chartrand2008rainbow} where they also determined the precise values of rainbow connection number for some special graphs. Subsequently, there have been various investigations towards finding good upper bounds for rainbow connection number in terms of other graph parameters \cite{caro2008rainbow,schiermeyer2009rainbow,chandran2011raindom,basavaraju2012radius} and for many special graph classes \cite{li2011linegraphs,chandran2011raindom,basavaraju2011products}. Behaviour of rainbow connection number in random graphs is also well studied \cite{caro2008rainbow,he2010rainthreshold,shang2011randombipartite,frieze2012rainbow}. A basic introduction to the topic can be found in Chapter $11$ of the book {\em Chromatic Graph Theory} by Chartrand and Zhang \cite{chartrand2008chromatic} and a survey of most of the recent results in the area can be found in the article by Li and Sun \cite{li2012rainsurvey} and also in their monograph {\em Rainbow Connection of Graphs} \cite{li2012rainbowbook}. 

\section{Our contribution}

In this article we focus on the computational complexity of the following decision problem on split graphs (Definition \ref{defClasses}).

\begin{problem}[$\rc(G,k)$]
\label{problemRainbowColour}
Given a connected graph $G$ and a positive integer $k$, decide whether $G$ can be rainbow coloured using $k$ colours.
\end{problem}
  
The first result showing the computational complexity of the above problem was due to Chakraborty, Fischer, Matsliah, and Yuster \cite{chakraborty2011hardness}. They showed that  it is \NP-hard to compute the rainbow connection number of an arbitrary graph. In particular, it was shown that the problem $\rc(G,2)$ is \NP-complete. Later, Ananth, Nasre, and Sarpatwar \cite{ananth2011fstrcs} complemented the above result and now we know that for every integer $k$, $k \geq 2$, the problem $\rc(G, k)$ is \NP-complete. This prompts one to look at the computational complexity of the problem on special graph classes. Chandran and Rajendraprasad have shown that $\rc(G,k)$ is solvable in linear time for threshold graphs, \NP-complete on split graphs for $k=3$ and \NP-complete on chordal graphs for all $k \geq 3$ \cite{chandran2012split}. It is easy to see that complete graphs alone can be rainbow coloured using $1$ colour. The complexity of the problem $\rc(G,k)$ on chordal graphs for $k = 2$ and split graphs for all positive integers $k$ except $1$ and $3$ was left open. In this article, we solve the same and thereby discover the following interesting dichotomy. 

\begin{theorem}
\label{theoremSplitDichotomy}
The problem $\rc(G,k)$ on split graphs is \NP-complete for $k \in \{2,3\}$ and polynomial-time solvable for all other values of $k$. 
\end{theorem}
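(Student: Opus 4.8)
The plan is to treat the four ranges $k=1$, $k=2$, $k=3$, and $k\ge 4$ separately. The case $k=1$ is immediate, since $rc(G)=1$ exactly when $G$ is complete, which is decidable in linear time. For $k=3$ the \NP-hardness is precisely the theorem of Chandran and Rajendraprasad \cite{chandran2012split}. Membership in \NP\ for every $k$ is routine: a $k$-edge-colouring of $G$, together with a witnessing rainbow path for each pair of vertices, is a certificate of polynomial size verifiable in polynomial time. Hence the two genuinely new ingredients are (i) \NP-hardness of $\rc(G,2)$ restricted to split graphs, and (ii) a polynomial-time algorithm for $\rc(G,k)$ on split graphs when $k\ge 4$.

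For (i) I would give a polynomial-time reduction from $\sat$ (routed, if convenient, through an intermediate combinatorial problem such as $\matrixfill$). The guiding observation is that with only two colours every rainbow path has length at most two, so a split graph $G$ with clique $K$ and independent set $I$ is rainbow-$2$-connected if and only if $E(G)$ admits a $2$-colouring $c$ in which (a) every two vertices $y,y'\in I$ have a common neighbour $w$ with $c(wy)\ne c(wy')$, and (b) every non-adjacent pair $x\in K$, $y\in I$ has a common neighbour $x'\in K$ with $c(x'x)\ne c(x'y)$; all other pairs are joined by an edge. In the reduction one encodes the truth value of a variable by the colour of a designated edge, a clause by a vertex of $I$ whose rainbow connection to a fixed partner under condition (a) certifies that the clause is satisfied, and occurrence-consistency by further $I$-vertices that force the two colours apart on the relevant edges; restricted neighbourhoods make conditions (a) and (b) interlock, while padding vertices in $K$ and $I$ ensure that the many incidental pairs created by the gadgets are joined by an edge and impose no constraint. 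Both directions of the correctness proof then have to be carried out.

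For (ii) I would use the linear-time algorithm of Chandran and Rajendraprasad \cite{chandran2012split} that rainbow-colours any split graph with at most $rc(G)+1$ colours. On input $(G,k)$ with $k\ge 4$, run it to obtain a value $a$ with $rc(G)\le a\le rc(G)+1$; answer \textsc{Yes} if $k\ge a$ and \textsc{No} if $k\le a-2$. The only remaining case is $k=a-1$, in which $rc(G)\in\{k,k+1\}$ with $k\ge 4$, and here I would prove a polynomially-checkable structural characterisation of the split graphs with $rc(G)\le k$ (valid for $k\ge 4$). The reason such a characterisation is available for $k\ge 4$ but not for $k\in\{2,3\}$ is that with four or more colours a connecting path of length three or four through two or three clique vertices can always be rerouted around a colour collision, so the only genuine obstructions are local: principally the number of degree-one vertices of $I$ attached to a single vertex of $K$ (which is a lower bound on $rc(G)$), together with a bounded amount of extra bookkeeping around the low-degree vertices of $K$. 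Each such condition is tested directly, or recast as a bipartite matching or flow feasibility question.

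I expect the main obstacles to be the following. On the hardness side, the difficulty is designing gadgets that, being split graphs, survive the extreme rigidity of the length-at-most-two requirement and, above all, do not leak spurious constraints through the incidental vertex pairs they create, so that the backward direction of the reduction goes through; this typically forces a careful global padding argument. On the algorithmic side, the difficulty is proving the structural characterisation for $k\ge 4$, i.e.\ showing rigorously that the short-path rigidity responsible for \NP-hardness at $k\le 3$ really does dissolve once a fourth colour becomes available, so that rainbow-$k$-colourability of a split graph reduces to a short, polynomially-verifiable list of local conditions.
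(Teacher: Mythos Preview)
Your four-case decomposition and the treatment of $k=1$ and $k=3$ match the paper exactly. For $k=2$ the paper likewise reduces from $\sat$ through an intermediate combinatorial problem; concretely it uses the problem $\bcc(G,\S)$ of deciding whether the edges of a graph $G$ can be covered by bicliques supported on a prescribed family $\S$ of vertex subsets. Your observation that with two colours every rainbow path has length at most two is precisely the starting point of that reduction.

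For $k\ge 4$ your framework (use the $+1$ algorithm, then resolve the boundary case by a structural test) is compatible with the paper's, but your identification of the ``principal obstruction'' is wrong, and following it would lead you to design the wrong test. The relevant lower bound is not the maximum number of pendants attached to a \emph{single} clique vertex, but the \emph{total} number $p=|\pen(G)|$: every pendant edge is a bridge, so all pendant edges must receive pairwise distinct colours, giving $rc(G)\ge p$ outright. Combined with the known upper bound $rc(G)\le p+1$ from \cite{chandran2012split}, this already settles the question whenever $p\ne k$. The paper's contribution for the remaining case $p=k\ge 4$ is Lemma~\ref{lemmaSplitPendant}: if $G$ contains any one of four fixed small split graphs $G_{111},G_{400},G_{310},G_{2200}$ as a subgraph with its pendants among $\pen(G)$, then $rc(G)=p$. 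The proof is by four explicit hand-built $p$-colourings, one per pattern of how three or four pendants can be distributed among clique vertices. A short case analysis (Corollary~\ref{corSplitPoly}) then shows that for $p\ge 4$ exactly one configuration escapes these patterns (essentially $G[K\cup\pen(G)]\cong G_{220}$, two pendants at each of two vertices of a triangle), and that configuration is handled by a single additional fixed-size subgraph check. So the boundary case is resolved by a pendant count and a constant number of bounded-size subgraph tests, not by matchings or flows.
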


% On the positive side, Basavaraju et al. have demonstrated an $O(nm)$-time $(r+3)$-factor approximation algorithm for rainbow colouring any graph with radius $r$ \cite{basavaraju2012radius}. Constant factor approximation algorithms for rainbow colouring Cartesian, strong and lexicographic products of non-trivial graphs are reported in \cite{basavaraju2011products}. Constant factor approximation algorithms for bridgeless chordal graphs, and additive approximation algorithms for interval, AT-free, threshold and circular arc graphs without pendant vertices will follow from the proofs of their upper bounds \cite{chandran2011raindom}.  To the best of our knowledge, no efficient optimal rainbow colouring algorithm has been reported for any non-trivial subclass of graphs.

\section{On the proofs}

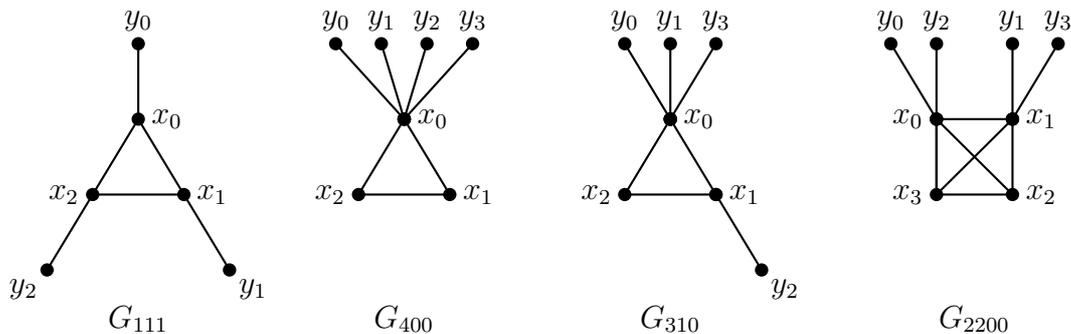
\begin{figure}
\begin{center}
\begin{pspicture}(0,0)(15,4)
\psset{unit=1cm}
%   	\psgrid
	\rput{0}( 2.0,2){\GOneOneOne}
	\rput{0}( 5.5,2){\GFourZeroZero}
	\rput{0}( 9.0,2){\GThreeOneZero}
	\rput{0}(13.0,2){\GTwoTwoZeroZero}
\end{pspicture}
\end{center}
\caption{Four special split graphs which constitute the set $\G$}
\label{figSpecialGraphs}
\end{figure}

First we show that the problem $\rc(G,k)$ is polynomial time solvable for $k \geq 4$ by demonstrating the following structural result whose proof is given in Appendix \ref{secFourColours}. Let $\pen(G)$ denotes the set of pendant vertices (vertices with exactly one neighbour) in a graph $G$.

\begin{lemma}
\label{lemmaSplitPendant}
If a split graph $G$, under some isomorphism, contains any of the graphs $H \in \G$ in Figure \ref{figSpecialGraphs} as a subgraph with $\pen(H) \subseteq \pen(G)$, then $rc(G) = |\pen(G)|$.
\end{lemma}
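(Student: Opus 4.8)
The plan is to establish the two bounds separately. The inequality $rc(G)\ge|\pen(G)|$ needs no structure whatsoever: if $y\ne y'$ are pendant vertices of $G$ with (unique) neighbours $x$ and $x'$, then every $y$--$y'$ path contains the two distinct edges $xy$ and $x'y'$, so under any rainbow colouring the $|\pen(G)|$ pendant edges receive pairwise distinct colours. It therefore remains to exhibit a rainbow colouring of $G$ using exactly $n:=|\pen(G)|$ colours; here $n\ge|\pen(H)|\ge 3$, since every member of $\G$ has at least three pendants. To set things up I would fix a split partition $(K,I)$ of $G$ and note, using that $H$ contains a triangle and that a pendant of $G$ and its sole neighbour cannot both lie in the independent set, that $|K|\ge 3$, that $\pen(G)\subseteq I$ with each pendant having a single neighbour in $K$, and that --- after possibly replacing $(K,I)$ by another split partition, or treating a triangle-vertex of $H$ that happens to lie in $I$ as one of the non-pendant independent vertices handled below --- the clique-vertices $x_0,x_1,\dots$ of $H$ all lie in $K$.

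I would then colour $G$ in three layers. First, distribute the $n$ colours bijectively among the $n$ pendant edges, fixing the colours on the pendant edges of $H$ according to which member of $\G$ occurs (see Figure~\ref{figSpecialGraphs}); this is the only place membership in $\G$ enters. The mechanism differs per configuration: $G_{111}$ forces three distinct colours onto three distinct clique vertices, so that every clique edge has a colour that is a pendant colour of neither endpoint; $G_{400}$ and $G_{310}$ permit one clique vertex to carry several pendant colours but compensate by providing two reroute vertices through the triangle; $G_{2200}$ provides a whole $K_4$ for the same purpose. Second, colour each clique edge $uw$, whenever possible, with a colour that is a pendant colour of neither $u$ nor $w$, and colour the remaining clique edges so that from every pendant $y$ there is a rainbow path of length at most $3$ to every clique vertex, the length-$3$ detours running inside the triangle or $K_4$ of $H$. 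Third, for each non-pendant vertex $v\in I$ --- which has at least two neighbours in $K$ --- select two of them, give those two edges two distinct colours drawn from the still-free part of the palette so that $v$ reaches every clique vertex along a path of length at most $2$, and colour the other edges at $v$ arbitrarily.

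Finally I would verify rainbow connectivity by cases on the type of a pair of vertices. Two clique vertices are adjacent. A pendant and its own clique vertex, or two pendants on the same clique vertex, lie on a path of length at most $2$ that is rainbow by bijectivity of the pendant colouring and the clique-edge rule. Two pendants $y,y'$ on distinct clique vertices $u,w$ use the path $y\,u\,w\,y'$ when $c(uw)\notin\{c(uy),c(wy')\}$, and otherwise a detour $y\,u\,z\,w\,y'$ through the triangle or $K_4$ of $H$. Every pair involving a non-pendant vertex of $I$ reduces to one of the preceding cases by prepending the one or two distinguished edges at that vertex. Each of these checks uses nothing beyond $n\ge 3$ and the structure certified by $H$.

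The step I expect to be the genuine obstacle is exactly the pendant--pendant and pendant--clique connectivity when a single clique vertex carries so many pendants that its pendant colours exhaust, or nearly exhaust, the whole palette: the naive length-$2$ and length-$3$ routes are then unavailable, and one is forced to reroute through the clique, which works only because the configuration taken from $\G$ simultaneously guarantees that at least three colours are present and that there is enough clique structure --- a triangle with a suitable pendant distribution, or a $K_4$ --- to carry those detours. Doing this uniformly across all four members of $\G$, and checking that the colouring chosen for the non-pendant vertices of $I$ never conflicts with these detours, is where essentially all the effort goes; the rest is bookkeeping, and each resulting colouring can be recorded in a single picture.
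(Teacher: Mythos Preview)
Your overall strategy matches the paper's: the lower bound is the trivial pendant-edge argument, and the upper bound is an explicit colouring verified pair by pair. The paper, however, makes one move at the outset that you omit and that changes the character of the whole argument: it first strips $G$ down to the induced subgraph $G'$ on $V(G)\setminus(\pen(G)\setminus\pen(H))$, observes that a rainbow colouring of $G'$ with $|\pen(H)|$ colours extends to one of $G$ with $|\pen(G)|$ colours by giving each removed pendant edge a fresh colour, and thereby reduces to the case $\pen(G)=\pen(H)$, i.e.\ to exactly three or four colours. After that reduction the colouring is not a heuristic but a completely explicit function: $K$ is partitioned into singletons $K_0,K_1$ (and $K_2$ in the $G_{2200}$ case) plus a remainder, every clique edge between parts $K_i$ and $K_j$ gets a fixed colour outside $\{i,j\}$, and every edge from a non-pendant independent vertex into $K_i$ gets colour $i$. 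The verification is then a short finite list of rainbow paths per case.

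The genuine gap in your plan is the third layer and its interaction with the first two. You propose to colour the two distinguished edges at each non-pendant $v\in I$ ``from the still-free part of the palette'' so that $v$ reaches every clique vertex in two rainbow steps, and then to handle any pair involving $v$ by ``prepending'' those edges. But when $\pen(G)=\pen(H)$ the palette has only three or four colours and nothing is free; moreover, once the clique colouring is fixed, the colours on $v$'s two edges must avoid specific clique-edge colours \emph{and} simultaneously not collide with the pendant colours appearing on the length-$3$ detours you set up in layer two, and you give no argument that a compatible choice exists. The paper's partition-based scheme sidesteps this entirely: because the edge $vK_i$ always carries colour $i$ and every $K_i$--$K_j$ edge carries a colour outside $\{i,j\}$, the path $v\!-\!K_i\!-\!K_j\!-\!w$ is automatically rainbow for any $w$ attached to $K_j$, so no per-vertex choice is needed. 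That structural idea---colour by clique-part index rather than by local availability---is what your sketch is missing.
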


From the above lemma and the easy observation that $rc(G) \geq |\pen(G)|$ for any graph, it follows that for each integer $k \geq 4$ there exists a polynomial time algorithm to check if the rainbow connection number of a split graph is at most $k$. The proof gives an explicit rainbow colouring of $G$ using $|\pen(G)|$ colours if it contains any of the graphs $H \in \G$ as a subgraph with $\pen(H) \subset \pen(G)$, and thus we show that any split graph with rainbow connection number at least $4$ can be optimally rainbow coloured in polynomial time (Corollary \ref{corSplitPoly} in Appendix \ref{secFourColours}).

Next we show that the problem $\rc(G,2)$ remains \NP-complete for split graphs. This is established by showing a two-step reduction. Given a graph $G = (V,E)$, and a collection of subsets $\S$ of $V$, the problem $\bcc(G, \S)$ is to decide whether there exists a {\em bipartitioning function} $X : \S \into 2^V$ such that $X(T) \subset T, \forall T \in \S$ and $G$ is covered by the family of bicliques $\{ (X(T), T \setminus X(T)) : T \in \S\}$. We show that $\sat$ is reducible to $\bcc$ which in turn is reducible to $\rc(G,2)$ with $G$ being a split graph (Lemmata \ref{lemmaBCCtoRC} and \ref{lemma3SATtoBCC} in Appendix \ref{secTwoColours}).

\section{Consequences}

The problems below are only superficially different from the $\rc(G,2)$ problem on split graphs (see the discussion after Problem \ref{problemBipartiteRainbow} in Appendix \ref{secTwoColours}) and hence we deduce that they are also \NP-complete (the problem size being $O(mn)$ in each case).

\begin{problem}[$\textsc{EnsureDistinctRows}(C)$]
\label{problemMatrix}
Given a subset $C \subset [m] \times [n]$ of locations, decide whether there exists an $m \times n$ matrix $M$ with entries from $\{0,1\}$ such that any two rows of $M$ will remain distinct, no matter what changes are made to the entries of $M$ at locations in $C$.
\end{problem}

\begin{problem}[$\textsc{OrthogonalPacking}(B)$]
\label{problemPacking}
Given a set $B$ of $m$ $n$-dimensional boxes whose sides are either $1$ or $1/2$ in each dimension, decide whether they can be packed without rotation into an $n$-dimensional unit cube.
\end{problem}

We would also like to emphasise that the problem $\rc(G,2)$ is known to be linear time solvable for threshold graphs, which are split graphs in which the neighbourhoods of the independent set vertices form a total order under inclusion. In particular a threshold graph $G$ can be rainbow coloured using $2$ colours if and only if the degrees of the vertices in a maximum independent set $I$ of $G$ satisfy the Kraft's inequality, viz. $\sum_{v \in I} 2^{-d(v)} \leq 1$ where $d(v)$ denotes the degree of a vertex $v$ \cite{chandran2012split}. The problem $\textsc{EnsureDistinctRows(C)}$ can be viewed as a combinatorial generalisation of the problem of constructing a prefix-free code given a set of desired lengths. The latter is poly-time solvable while the above generalisation is shown here to be \NP-complete.

\bibliographystyle{plain}
\bibliography{deepak}

\clearpage
\appendix

\section{Appendix}

\subsection{Notation and definitions}

All graphs considered in this article are finite, simple and undirected. For a graph $G$, we use $V(G)$ and $E(G)$ to denote its vertex set and edge set respectively. An edge $\{u,v\}$ in a graph may be denoted by $uv$ to reduce clutter. Unless mentioned otherwise, $n$ and $m$ will respectively denote the number of vertices and edges of the graph in consideration. The subgraph of $G$ induced on a vertex set $S \subset V(G)$ is denoted by $G[S]$.

The shorthand $[n]$ denotes the set $\{1, \ldots, n\}$. The cardinality of a set $S$ is denoted by $|S|$ and the family of all subsets of $S$ is denoted by $2^S$. The union of two disjoint sets $A$ and $B$ is denoted by $A \dot \cup B$.

\begin{definition}
Let $G$ be a connected graph. The {\em length} of a path is its number of edges. The {\em distance} between two vertices $u$ and $v$ in $G$, denoted by $d(u,v)$ is the length of a shortest path between them in $G$. The {\em diameter} of $G$ is $diam(G) := \max_{u,v \in V(G)}{d(u,v)}$.
\end{definition}

\begin{definition} \label{defDegree}
The {\em neighbourhood} $N(v)$ of a vertex $v$ is the set of vertices adjacent to $v$ but not including $v$.  A vertex is called {\em pendant} if its degree is $1$. An edge incident on a pendant vertex is called a {\em pendant edge} and the set of pendant vertices of a graph $G$ is denoted by $\pen(G)$.
\end{definition}

\begin{definition} \label{defClasses}
A graph $G$ is called {\em chordal}, if there is no induced cycle of length greater than $3$. A graph $G$ is a {\em split graph}, if $V(G)$ can be partitioned into a clique and an independent set. A graph $G$ is a {\em threshold graph}, if there exists a weight function $w:V(G) \into \R$ and a real constant $t$ such that two vertices $u, v \in V(G)$ are adjacent if and only if $w(u) + w(v) \geq t$.
\end{definition}

% Before getting into the main results, we note two elementary and well known observations on rainbow colouring whose proofs we omit. 
% 
% \begin{observation}
% \label{obsDiameter}
% For every connected graph $G$, we have $rc(G) \geq diam(G)$.
% \end{observation}
% \begin{proof}
% If $diam(G) = d$, then there exists two vertices $u, v \in V(G)$ which are a distance $d$ apart i.e, every path between $u$ and $v$ has length at least $d$. Since we need to have a rainbow path between $u$ and $v$ we require at least $d$ colours in any rainbow colouring of $G$.
% \end{proof}

% \begin{observation}
% \label{obsPendant}
% If $u$ and $v$ are two pendant vertices in a connected graph $G$, then their incident edges get different colours in any rainbow colouring of $G$. In particular, if $G$ has $p$ pendant vertices, then $rc(G) \geq p$. 
% \end{observation}
% \begin{proof}
% If the edges incident on $u$ and $v$ share the same colour, then there is no rainbow path between $u$ and $v$. In any edge colouring of $G$ that uses fewer than $p$ colours, there exists two pendant vertices $u$ and $v$ such that their incident edges the same colour. Hence the colouring cannot be a rainbow colouring of $G$.
% \end{proof}

\subsection{More than three colours: Polynomial time solution}
\label{secFourColours}

\subsubsection*{Proof of Lemma \ref{lemmaSplitPendant}}

\begin{figure}
\begin{center}
\begin{pspicture}(0,0)(15,16)
%  \psgrid
	\rput{0}( 3,13){\FigureColourGOneOneOne} 	
	\rput{0}( 3,9.5){Case \ref{caseGOneOneOne} (Equation \ref{eqnColourGOneOneOne})}
	\rput{0}(11,13){\FigureColourGFourZeroZero} 
	\rput{0}(11,9.5){Case \ref{caseGFourZeroZero} (Equation \ref{eqnColourGFourZeroZero})}
	\rput{0}( 3,4){\FigureColourGThreeOneZero} 	
	\rput{0}( 3,0.5){Case \ref{caseGThreeOneZero} (Equation \ref{eqnColourGThreeOneZero})}
	\rput{0}(11,4){\FigureColourGTwoTwoZeroZero} 
	\rput{0}(11,0.5){Case \ref{caseGTwoTwoZeroZero} (Equation \ref{eqnColourGTwoTwoZeroZero})}
\end{pspicture}
\end{center}
\caption{
A partial illustration of the four colourings defined in the proof of Lemma \ref{lemmaSplitPendant}. Only one representative vertex from the each part of the independent set is illustrated. The edge-colours are indicated in red to distinguish them from other labels.
} 
\label{figColourSpecialGraphs}
\end{figure}

\begin{proof}[Statement]
If a split graph $G$, under some isomorphism, contains any of the graphs $H \in \G$ in Figure \ref{figSpecialGraphs} as a subgraph with $\pen(H) \subseteq \pen(G)$, then $rc(G) = |\pen(G)|$.

Let us relabel the vertices of $G$ so that $H$ is contained as a (labelled) subgraph of $G$ with $\pen(H) \subset \pen(G)$. First we note that it suffices to prove the statement when $\pen(G) = \pen(H)$. Suppose $P' = \pen(G) \setminus \pen(H)$ is non-empty. Then consider the induced subgraph $G'$ of $G$ obtained by removing all the vertices in $P'$. Note that $G'$ also has $H$ as a subgraph with $\pen(H) \subset \pen(G')$. If $G'$ can be rainbow coloured with $|\pen(G')|$ colours, we can easily extend this to a rainbow colouring of $G$ with $p$ colours by giving a new colour to each edge of $G$ incident to a vertex in $P'$. Henceforth in this proof we assume $\pen(G) = \pen(H)$.

The proof is divided into four cases based on $H \in \G$. In each case, we describe an edge-colouring $c_G$ of $G$ using $|\pen(H)|$ colours and then show that $c_G$ makes $G$ rainbow connected. A partial illustration of the colourings is given in Figure \ref{figColourSpecialGraphs}. In each case, we set $K$ to be a maximal clique in $G$, $I = V(G) \setminus K$, $P = \pen(G)$ and $I' = I \setminus P$. For each $v \in I'$, we can assume that $v$ has exactly $2$ neighbours in $K$. Remaining edges from $I'$ to $K$ are not used in our colouring and hence may be assumed absent. In the first three cases below, that is when $H \in \{G_{111}, G_{400}, G_{310}\}$, we partition $K$ and $I'$ as follows. Vertices in $K$ are grouped into $3$ parts $K_0 = \{x_0\}$, $K_1 = \{x_1\}$ and $K_2 = K \setminus \{x_0, x_1\}$ while the vertices in $I'$ are grouped into $4$ parts $I_{0,1}$, $I_{1,2}$, $I_{2,0}$ and $I_{2,2}$, where $I_{i,j}$, $i \neq j$, consists of those vertices in $I'$ with one neighbour in $K_i$ and one neighbour in $K_j$ and $I_{2,2}$ consists of those vertices in $I'$ with both neighbours in $K_2$. In the fourth case, $K$ is partitioned into $4$ parts $K_i = \{x_i\}, i \in \{0,1,2\}$, and $K_3 = K \setminus \{x_0, x_1, x_2\}$ while $I'$ is partitioned into $7$ parts $I_{i,j}, \{i, j\} \subset \Z_4, i \neq j$, and $I_{3,3}$ as before. While defining a colouring $c_G$ of $E(G)$, we will use the shorthand $c_G(A,B) = i$ to indicate that $c_G(\{a,b\}) = i$, for all $\{a,b\} \in E(G)$ such that $a \in A$ and $b \in B$.

\begin{case}[$H = G_{111}$]
\label{caseGOneOneOne}
In this case,  $\pen(G) = \{y_0, y_1, y_2\}$ and thus $p=3$. We define the $3$-colouring $c_G : E(G) \into \Z_3$ (See Figure \ref{figColourSpecialGraphs}). 

\begin{equation}
\begin{array}{rcll}
c_G(K_i, K_j) 					&=& k,	& \textnormal{where } \{i,j,k\} = \Z_3, \\
c_G(K_2, K_2) 					&=& 1, 	& \\
c_G(K_i, I \setminus I_{2,2}) 	&=& i, 	& \forall i \in \Z_3, \textnormal{ and} \\
c_G(\{v, u_l\}) 				&=& l, 	& \forall v \in I_{2,2}, l \in \{0,2\} \textnormal{ and } N(v) = \{u_0, u_2\}. 
\end{array}
\label{eqnColourGOneOneOne}
\end{equation}

Now we show that $c_G$ is a rainbow colouring of $G$ by listing down a rainbow path between every pair of vertices which are at a distance of at least $2$ from each other. Let $I_i$, $i \in \{0,1,2\}$, denote the set of vertices in $I$ with at least one neighbour in $K_i$. Note that a vertex in $I_{i,j}$ is part of $I_i$ and $I_j$ and hence two distinct vertices in $I_{i,j}$ are connected by a rainbow path of the second type in the list below.

\begin{align*}
\Path{u \in I_i}{v \in K_j, \; i \neq j}
	&: u \Edge{i} K_i \Edge{k} K_j & 
	(\textnormal{where } \{i,j,k\} = \Z_3) \\
\Path{u \in I_i}{v \in I_j, \; i \neq j}
	&: u \Edge{i} K_i \Edge{k} K_j \Edge{j} v 
	& (\textnormal{where } \{i,j,k\} = \Z_3) \\
\Path{u \in I_{2,2} \cup \{y_2\}}{v \in K_2, \; v \notin N(u)}
	&: u \Edge{2} K_2 \Edge{1} v & \\
\Path{u \in I_{2,2} \cup \{y_2\}}{v \in I_{2,2}, \; v \neq u}
	&: u \Edge{2} K_2 \Edge{1} K_2 \Edge{0} v &
\end{align*}
\end{case}

\begin{case}[$H = G_{400}$]
\label{caseGFourZeroZero}
In this and next two cases, $\pen(G) = \{y_0, \ldots, y_3\}$ and thus $p = 4$. We define a $4$-colouring $c_G : E(G) \into \Z_4$. 

\begin{equation}
\begin{array}{rcll}
c_G(K_0, K_1) 					&=& 3, & \\
c_G(K_1, K_2) 					&=& 0, & \\
c_G(K_2, K_0) 					&=& 1, & \\
c_G(K_2, K_2) 					&=& 1, & \\
c_G(\{y_i, x_0\}) 				&=& i, & \forall i \in \Z_4, \\
c_G(K_i, I' \setminus I_{2,2}) 	&=& i, & \forall i \in \{0,1,2\},  \textnormal{ and} \\
c_G(\{v, u_l\}) 				&=& l, & \forall v \in I_{2,2}, l \in \{0,2\} 
										 \textnormal{ and } N(v) = \{u_0, u_2\}.
\end{array}
\label{eqnColourGFourZeroZero}
\end{equation}

Notice that the colouring defined by Equation \ref{eqnColourGFourZeroZero} is similar to that defined by Equation \ref{eqnColourGOneOneOne} except for the pendant edges and the clique edge $\{x_0, x_1\}$. Now we show that $c_G$ is a rainbow colouring of $G$ by listing down a rainbow path between every pair of vertices which are at a distance of at least $2$ from each other. This time, let $I_i$, $i \in \{0,1,2\}$, denote the set of vertices in $I \setminus \{y_1,y_2,y_3\}$ with at least one neighbour in $K_i$.

\begin{align*}
\Path{I_i}{K_j \then I_j, \; i \neq j}
	&: I_i \Edge{i} K_i \Edge{k} K_j \Edge{j} I_j \; 
	(\textnormal{where }k \in \Z_4 \setminus \{i,j\})\\
\Path{I_{2,2}}{K_2 \then I_{2,2}}
	&: I_{2,2} \Edge{2} K_2 \Edge{1} K_2 \Edge{0} I_{2,2} \\
\Path{\{y_1,y_2,y_3\}}{v \in I_{2,0} \cup I_{0,1}}
	&: y_i \Edge{i} K_0 \Edge{0} v \\
\Path{y_1}{K_1 \then K_2 \then I_{2,2} \cup I_{1,2}}
	&: y_1 \Edge{1} K_0 \Edge{3} K_1 \Edge{0} K_2 \Edge{2} I_{2,2} \cup I_{1,2} \\
\Path{y_2}{K_2 \then I_{2,2}}
	&: y_2 \Edge{2} K_0 \Edge{1} K_2 \Edge{0} I_{2,2} \\
\Path{y_2}{K_1 \then I_{1,2}}
	&: y_2 \Edge{2} K_0 \Edge{3} K_1 \Edge{1} I_{1,2} \\
\Path{y_3}{K_2 \then I_{2,2} \cup I_{1,2}}
	&: y_3 \Edge{3} K_0 \Edge{1} K_2 \Edge{2} I_{2,2} \cup I_{1,2} \\
\Path{y_3}{K_1}
	&: y_3 \Edge{3} K_0 \Edge{1} K_2 \Edge{0} K_1
\end{align*}

\end{case}

\begin{case}[$H = G_{310}$]
\label{caseGThreeOneZero}
The colouring $c_G: E(G) \into \Z_4$ that we define in this case is similar to Case \ref{caseGFourZeroZero}. The only difference is that the pendant vertex $y_2$ is now adjacent to $x_1$ instead of $x_0$.

\begin{equation}
\begin{array}{rcll}
c_G(K_0, K_1) 					&=& 3, & \\
c_G(K_1, K_2) 					&=& 0, & \\
c_G(K_2, K_0) 					&=& 1, & \\
c_G(K_2, K_2) 					&=& 1, & \\
c_G(\{y_i, x_0\}) 				&=& i, & \forall i \in \{0,1,3\}, \\
c_G(\{y_2, x_1\}) 				&=& 2, & \\
c_G(K_i, I' \setminus I_{2,2}) 	&=& i, & \forall i \in \{0,1,2\}, \textnormal{ and} \\
c_G(\{v, u_l\}) 				&=& l, & \forall v \in I_{2,2}, l \in \{0,2\} 
										 \textnormal{ and } N(v) = \{u_0, u_2\}.
\end{array}
\label{eqnColourGThreeOneZero}
\end{equation}

Since all pairs of vertices not involving $y_2$ are connected by rainbow paths as described in Case \ref{caseGFourZeroZero}, we only indicate below rainbow paths from $y_2$ to every other vertex in $G$ to claim that $c_G$ rainbow connects $G$. 

\begin{align*}
\Path{y_2}{v \in I_{0,1} \cup I_{1,2}}
	&: y_2 \Edge{2} K_1 \Edge{1} v \\
\Path{y_2}{K_0 \then K_2 \then I_{2,2}}
	&: y_2 \Edge{2} K_1 \Edge{3} K_0 \Edge{1} K_2 \Edge{0} I_{2,2} \\
\Path{y_2}{v \in \{y_0\} \cup I_{2,0}}
	&: y_2 \Edge{2} K_1 \Edge{3} K_0 \Edge{0} v \\
\Path{y_2}{y_1}
	&: y_2 \Edge{2} K_1 \Edge{3} K_0 \Edge{1} y_1 \\
\Path{y_2}{y_3}
	&: y_2 \Edge{2} K_1 \Edge{0} K_2 \Edge{1} K_0 \Edge{3} y_3 
\end{align*}
\end{case}

\begin{case}[$H = G_{2200}$]
\label{caseGTwoTwoZeroZero}
Recall that in this case, unlike the previous three cases, we have partitioned $K$ into $4$ parts and $I'$ into $7$ parts. The colouring $c_G : E(G) \into \Z_4$ is defined as follows (See Figure \ref{figColourSpecialGraphs}).

\begin{equation}
\begin{array}{rcll}
c_G(K_i, K_{i+1}) 				&=& i+2, 	& i \in \{0,2,3\}, \\
c_G(K_1, K_{2}) 				&=& 0, 		& \\
c_G(K_i, K_{i+2}) 				&=& i+3, 	& i \in \{0,1\}, \\
c_G(K_3, K_3) 					&=& 1, 		& \\
c_G(K_i, I' \setminus I_{3,3}) 	&=& i, 		& \forall i \in \Z_4, \\
c_G(\{v, u_l\}) 				&=& l, 		& \forall v \in I_{3,3}, l \in \{0,3\} 
										 	  \textnormal{ and } N(v) = \{u_0, u_3\},\\
c_G(\{y_i,x_0\}) &=& i \in \{0,2\}, &\textnormal{and}\\
c_G(\{y_i,x_1\}) &=& i \in \{1,3\}.
\end{array}
\label{eqnColourGTwoTwoZeroZero}
\end{equation}

Now we show that $c_G$ is a rainbow colouring of $G$ by listing down a rainbow path between every pair of vertices which are at a distance of at least $2$ from each other. This time, let $I_i$, $i \in \Z_4$, denote the set of vertices in $I \setminus \{y_2,y_3\}$ with at least one neighbour in $K_i$. Notice that, as in the previous cases, the edge(s) between every $K_i$ and $K_j$, $i \neq j$, is given a colour different from $i$ and $j$. This ensures rainbow paths between $I_i$ and $K_j \cup I_j$, and we need to work hard only to identify rainbow paths from $y_2$ and $y_3$ to rest of the graph.

\begin{align*}
% \begin{array}{rcl}
\Path{I_i}{K_j \then I_j, \; i \neq j}
	&: I_i \Edge{i} K_i \Edge{k} K_j \Edge{j} I_j \; 
	(\textnormal{where }k \in \Z_4 \setminus \{i,j\})\\
\Path{I_{3,3}}{K_3 \then I_{3,3}}
	&: I_{3,3} \Edge{3} K_3 \Edge{1} K_3 \Edge{0} I_{3,3} \\
\Path{y_2}{I_0}
	&: y_2 \Edge{2} K_0 \Edge{0} I_0  \\
\Path{y_2}{K_2 \then K_1 \then I_1}
	&: y_2 \Edge{2} K_0 \Edge{3} K_2 \Edge{0} K_1 \Edge{1} I_1  \\
\Path{y_2}{K_3 \then I_3}
	&: y_2 \Edge{2} K_0 \Edge{1} K_3 \Edge{3} I_3  \\
\Path{y_2}{y_3}
	&: y_2 \Edge{2} K_0 \Edge{1} K_3 \Edge{0} K_1 \Edge{3} y_3  \\
\Path{y_3}{I_1}
	&: y_3 \Edge{3} K_1 \Edge{1} I_1  \\
\Path{y_3}{K_0 \then I_0}
	&: y_3 \Edge{3} K_1 \Edge{2} K_0 \Edge{0} I_0  \\
\Path{y_3}{K_2 \then I_2}
	&: y_3 \Edge{3} K_1 \Edge{0} K_2 \Edge{2} I_2  \\
\Path{y_3}{K_3 \then I_{3,3}}
	&: y_3 \Edge{3} K_1 \Edge{2} K_0 \Edge{1} K_3 \Edge{0} I_{3,3} 
% \end{array}
\end{align*}

Though we haven't indicated rainbow paths from $y_2$ to $I_2$ and $y_3$ to $I_3$, since $I_2 \subset I_0 \cup I_1 \cup I_3$ and $I_3 \subset I_0 \cup I_1 \cup I_2 \cup I_{3,3}$, we have exhausted all pairs of vertices in the list above.
\end{case}
\end{proof}

\begin{figure}
\begin{center}
\begin{pspicture}(0,0)(10,4)
	\rput{0}(2,2){\GTwoTwoZero}
	\uput[d](2,0){$G_{220}$}
	\rput{0}(8,2){\GTwoTwoZeroPrime}
	\uput[d](8,0){$G_{220}^z$}
\end{pspicture}
\end{center}
\caption{The graphs $G_{220}$ and $G_{220}^z$ mentioned in the proof of Corollary \ref{corSplitPoly}.}
\label{figGTwoTwoZero}
\end{figure}
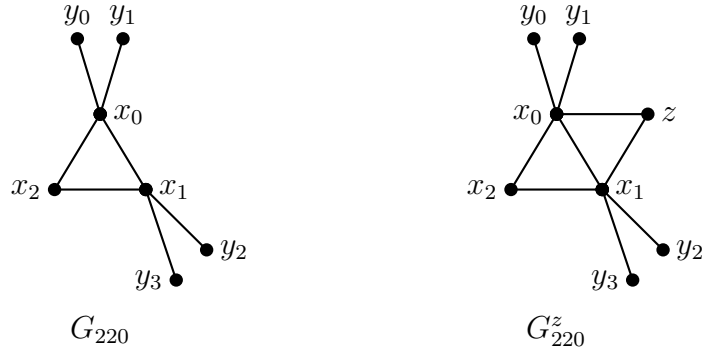

\begin{figure}
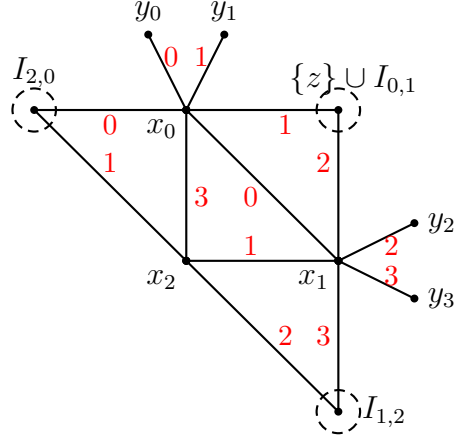

\begin{center}
\FigureColourGTwoTwoZeroPrime
\end{center}
\caption{A partial illustration of the colouring of $G$ when it contains $G_{220}^z$ as a subgraph. Part of proof of Corollary \ref{corSplitPoly}}
\label{figGTwoTwoZeroPrime}
\end{figure}

\begin{corollary} \label{corSplitPoly}
For each integer $k \geq 4$ there exists a polynomial time algorithm to check if the rainbow connection number of a split graph is at most $k$. Furthermore, any split graph with rainbow connection number at least $4$ can be optimally rainbow coloured in polynomial time.
\end{corollary}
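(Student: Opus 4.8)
The plan is to reduce both parts of the corollary to one task: a polynomial‑time procedure $\mathcal{A}$ that, on input a connected split graph $G$, outputs a valid rainbow colouring of $G$ using at most $\max(rc(G),4)$ colours. This suffices. When $rc(G)\geq4$ the colouring is then optimal, since it uses at least $rc(G)$ and at most $\max(rc(G),4)=rc(G)$ colours; this is the ``furthermore'' part. For a fixed $k\geq4$, run $\mathcal{A}$ to get a colouring with $r$ colours and answer ``yes'' iff $r\leq k$: if $rc(G)\leq k$ then $r\leq\max(rc(G),4)\leq k$, and if $rc(G)>k$ then $rc(G)\geq5$ so $r=rc(G)>k$. ($\mathcal{A}$ is never asked to decide $rc(G)\leq3$, which is \NP-hard; when $rc(G)\leq3$ it may freely use up to $4$ colours.) Everything hinges on a structural classification of split graphs with $rc(G)\geq4$, which $\mathcal{A}$ walks through case by case.

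The classification I would use, proved by an elementary case analysis on how the pendant vertices of $G$ sit on a clique $K$ of maximum size, is: (a) if $G$ is triangle‑free, then, being split, it is a star or double star, hence a tree, and $rc(G)=|E(G)|$ with the all‑distinct colouring; (b) if $G$ contains some $H\in\G$ with $\pen(H)\subseteq\pen(G)$, then $rc(G)=|\pen(G)|$ by Lemma \ref{lemmaSplitPendant}, whose proof also yields the colouring; (c) otherwise either $|\pen(G)|=4$, $|K|=3$, and the four pendants hang two‑and‑two on two clique vertices $x_0,x_1$ (so $G\supseteq G_{220}$, Figure \ref{figGTwoTwoZero}), or $G$ has at most three pendants, all attached to at most two clique vertices. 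Case (c) is routine bookkeeping: three pendants on three distinct clique vertices give $G_{111}$, four on one give $G_{400}$, three on one and one on another give $G_{310}$, a two‑and‑two pattern with $|K|\geq4$ gives $G_{2200}$; so every pendant distribution forces a subgraph from $\G$ except the two listed configurations. All the membership tests needed to decide which case applies — containment of a graph from $\G$, of $G_{220}$, of $G_{220}^z$ — run in polynomial time, since these are fixed small graphs with an easily checked pendant condition.

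For the family $G\supseteq G_{220}$, $|\pen(G)|=4$, I would prove $rc(G)\in\{4,5\}$, with $rc(G)=4$ exactly when $G$ also contains $G_{220}^z$ — equivalently, when some non‑pendant independent‑set vertex is adjacent to both $x_0$ and $x_1$ — and $rc(G)=5$ otherwise. The upper bounds are explicit: Figure \ref{figGTwoTwoZeroPrime} gives a $4$-colouring when $G_{220}^z\subseteq G$, and a similar, slacker construction gives a $5$-colouring in general. The key point is the lower bound $rc(G)\geq5$ when $G_{220}^z\not\subseteq G$: then $x_2$ is the only common neighbour of $x_0$ and $x_1$, so every rainbow path from a pendant at $x_0$ to a pendant at $x_1$ has length $3$ or $4$ and uses the edge $x_0x_1$ or the path $x_0x_2x_1$; a finite check on the colours of the four pendant edges together with $x_0x_1,x_0x_2,x_1x_2$ then shows four colours cannot connect all four such pairs simultaneously. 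For the remaining family — at most three pendants on at most two clique vertices — I would exhibit a rainbow $4$-colouring of all of $G$: with at most two clique vertices carrying pendants there are at most two genuinely hard vertex pairs, handled by colouring the relevant clique edges with a single spare colour and routing the independent‑set vertices through length‑$3$ detours inside the clique; emitting $4$ colours also certifies $rc(G)\leq4$ even when the exact value is smaller.

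The main obstacle is exactly this last construction together with the sharpness of the $G_{220}$ analysis: one must check that four colours \emph{always} suffice outside the $G_{220}$ family, and that five are \emph{necessary} in its exceptional subcase, uniformly over the arbitrarily many additional independent‑set vertices, spread over several clique pairs, that $G$ may carry. What makes this tractable is that only short paths matter — a path of length at least $5$ is never rainbow with $4$ colours, and one of length at least $6$ never with $5$ — so in each case only a bounded list of path shapes through the clique needs to be examined, after which the explicit colourings finish the proof.
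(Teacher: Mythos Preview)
Your proposal is correct and tracks the paper's proof closely: the same case split on whether $G$ is a tree, whether it contains some $H\in\G$ with $\pen(H)\subseteq\pen(G)$, and the residual $G_{220}$ configuration, with the same criterion $G_{220}^z\subseteq G$ for $rc(G)=4$ in that residual case. Your reformulation via an algorithm $\mathcal{A}$ producing a colouring with at most $\max(rc(G),4)$ colours is a clean way to package both halves of the corollary at once.

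The one substantive difference is how the ``easy'' side is handled. The paper does not construct $4$-colourings for the $p\le 3$ family, nor a $5$-colouring for $G_{220}$ without a $z$; instead it invokes the known $+1$ approximation for split graphs (Corollary~2 of \cite{chandran2012split}), which immediately gives $rc(G)\le p+1$ and an explicit colouring. This is exactly the step you flag as ``the main obstacle'' and only sketch. Your sketch is plausible and can be completed (indeed the cited result does it uniformly), but as written it is the least worked-out part of the proposal, whereas in the paper it is a one-line citation. Conversely, your route would make the corollary self-contained, which has its own value. The $G_{220}$ lower-bound argument you give---that with $x_2$ the sole common neighbour of $x_0,x_1$, the only rainbow candidates between cross pendants are the length-$3$ path through $x_0x_1$ and the length-$4$ path through $x_2$, and a single $2$-path cannot serve both $y_0\text{--}y_2$ and $y_0\text{--}y_3$---is precisely the paper's argument, stated a bit more explicitly.
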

\begin{proof}
Let $G$ be a split graph with $p$ pendant vertices. If $G$ is a tree, then $rc(G)$ is equal to the number of edges in $G$ and so we can check in linear time if $rc(G) \leq k$ for any $k$. In the case when $G$ is not a tree, a maximal clique $K$ in $G$ contains at least $3$ vertices. Fix any $k \geq 4$. If $p \leq k-1$, we know from \cite[Corollary $2$]{chandran2012split} that $rc(G) \leq k$. Similarly if $p > k$, then $rc(G) > k$. Hence we can assume that $p = k$. Let $K'$ be the vertices in $K$ which are adjacent to at least one pendant vertex of $G$. If $|K'| \geq 3$, then $G$ contains  $G_{111}$ as a subgraph with $\pen(G_{111}) \subset \pen(G)$. If  $|K'| \leq 2$ and $G[K \cup \pen(G)]$ is not isomorphic to $G_{220}$ (Figure \ref{figGTwoTwoZero}), then $G$ contains $H \in \{G_{400}, G_{310}, G_{2200}\}$ as a subgraph with $\pen(H) \subseteq \pen(G)$.  In all the cases above, it follows from Lemma \ref{lemmaSplitPendant} that  $rc(G) \leq k$ and the proof therein gives a rainbow colouring in polynomial time. 

If $G[K \cup \pen(G)]$ is isomorphic to $G_{220}$ then let us relabel $V(G)$ so that $G_{220}$ is a subgraph of $G$. It is not difficult to see that if $G$ has $G_{220}^z$ as a subgraph for some $z \in V(G)$ then the $rc(G) = 4$. See Figure \ref{figGTwoTwoZeroPrime} for a partial illustration of one possible rainbow colouring. Conversely, in any attempted rainbow colouring of $G$ using $4$ colours, the $4$ pendant edges $x_iy_j, i \in \{0,1\}, j \in \{0, \ldots, 3\}$ have to get $4$ different colours and the edge $x_0x_1$ has to reuse one of these $4$ colours, say the one used by $x_0y_0$ (as it is in Figure \ref{figGTwoTwoZeroPrime}). Then it is easy to see that we need at least two more $2$-length paths between $x_0$ and $x_1$ so as to provide rainbow paths from $y_0$ to $y_2$ and $y_3$, which is available only if $G$ has a subgraph isomorphic to $G_{220}^z$.  
\end{proof}

\subsection{Two colours: \NP-completeness}
\label{secTwoColours}

In order to show that $\rc(G,2)$ is \NP-complete, we will also use the following two decision problems:

\begin{problem}[$\bcc(G, \S)$]
Given a graph $G = (V,E)$, and a collection of subsets $\S$ of $V$, decide whether there exists a {\em bipartitioning function} $X : \S \into 2^V$ such that $X(T) \subset T, \forall T \in \S$ and for every edge $\{u,v\} \in E(G)$ there exists a $T \in \S$ with $u \in X(T)$ and $v \in T \setminus X(T)$. 
\end{problem}

\begin{problem}[$\sat(\varphi)$]
\label{problemSat}
Given a boolean formula $\varphi$ in which every clause contains exactly $3$ distinct literals corresponding to three distinct variables, decide whether there exists an evaluation of variables of $\varphi$ such that every clause contains at least one satisfied literal.
\end{problem}

Next two lemmata show a reduction of $\sat(\varphi)$ to $\bcc(G, \S)$ (Lemma \ref{lemma3SATtoBCC}) and a reduction of $\bcc(G, \S)$ to $\rc(G, 2)$ on split graphs (Lemma \ref{lemmaBCCtoRC}). Problem \ref{problemSat} is known to be \NP-complete since general \textsc{Sat} can be easily reduced to our version of $\sat$ problem. Note that all the three problems clearly belong to class \NP. It means that polynomial time reductions from $\sat(\varphi)$ to $\bcc(G, \S)$ and $\rc(G,2)$ is enough to show \NP-completeness of the latter. It can be easily seen that the reductions used in proofs of Lemmata \ref{lemmaBCCtoRC} and \ref{lemma3SATtoBCC} are polynomial. Thus we show that $\rc(G,2)$ is \NP-complete on split graphs (Theorem \ref{thm:NP}).

\begin{lemma}
\label{lemmaBCCtoRC}
Problem $\bcc(G, \S)$ is reducible to $\rc(G', 2)$ where $G'$ is a split graph.
\end{lemma}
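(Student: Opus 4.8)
The plan is to build, from an instance $(G,\S)$ of $\bcc$, a split graph $G'$ such that $G'$ is rainbow colourable with $2$ colours if and only if $(G,\S)$ admits a bipartitioning function $X$. The natural correspondence is: the $2$-colouring of the edges of $G'$ should encode, for each $T \in \S$, the split $X(T) \mathbin{\dot\cup} (T \setminus X(T))$, and the requirement that every edge $\{u,v\} \in E(G)$ be covered by some biclique $(X(T), T\setminus X(T))$ should translate into the requirement that $u$ and $v$ be joined by a rainbow (i.e. two-colour, length-$\le 2$) path in $G'$.

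Concretely, I would let the clique of $G'$ be $V(G)$ (the vertex set of the $\bcc$ instance), together with possibly one or two extra apex vertices to control the colouring globally. For each set $T \in \S$ I would introduce one independent-set vertex $w_T$ adjacent to exactly the vertices of $T$ in the clique. Then in any $2$-colouring of $E(G')$, the edges from $w_T$ to $T$ are partitioned into a colour-$0$ class and a colour-$1$ class; reading the colour-$0$ neighbours of $w_T$ as $X(T)$ gives the bipartitioning function. A pair $u,v \in T$ now has a rainbow path $u\,w_T\,v$ exactly when they lie on opposite sides of this split, i.e. when $\{u,v\}$ is covered by the biclique of $T$. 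The remaining work is to force, via the clique edges and the apex vertex(es), that (a) two clique vertices $u,v$ with $\{u,v\}\in E(G)$ are rainbow-connected \emph{only} through some $w_T$ (not directly and not through another clique vertex), so that covering $E(G)$ becomes necessary, and (b) all the ``non-essential'' pairs — clique vertices with $\{u,v\}\notin E(G)$, the independent vertices among themselves, independent-to-clique pairs, and the apex — are automatically rainbow-connected regardless of how $X$ is chosen, so they impose no constraint. Point (b) is typically arranged by making the clique edges monochromatic (say all colour $0$ except the edges incident to an apex, which get colour $1$), and adding a dominating apex vertex to the clique that supplies a length-$2$ colour-$\{0,1\}$ path between almost every pair; one then only has to be careful about pairs involving pendant-type structure.

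The delicate point — and the one I expect to be the main obstacle — is the ``only if'' direction combined with forcing non-adjacency to matter: I must ensure that whenever $\{u,v\}\in E(G)$, the graph $G'$ offers \emph{no other} rainbow path between $u$ and $v$, so that the only way to rainbow-connect them is to have some $w_T$ split them correctly, i.e. $\{u,v\}$ is covered. Since the clique edges will be used (monochromatically) to connect many pairs cheaply, I need a mechanism that ``breaks'' the direct clique edge $uv$ and all short detours precisely for the pairs in $E(G)$. The standard trick is to replace each edge $\{u,v\}\in E(G)$ by attaching a dedicated pendant vertex (or a private independent-set vertex of degree $2$) that is adjacent to $u$ and $v$ and to nothing else, together with a pendant vertex hanging off it, or alternatively to route things so that the clique edge $uv$ is \emph{absent} in $G'$ and the only $u$–$v$ connections of length $\le 2$ go through the $w_T$'s and the apex; one then arranges the colour of the apex edges so that the apex route fails exactly for these pairs. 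Verifying that this construction makes $G'$ a genuine split graph, that the reduction is polynomial-time, and that every claimed ``free'' rainbow path really exists under the fixed colouring of the clique, is routine case analysis; getting the gadget so that adjacency in $G$ is both \emph{necessary and sufficient} to need a covering $w_T$ is where the real care lies. I would finish by stating the equivalence in both directions: given $X$, colour $E(w_T) $ according to $X(T)$ and the rest by the fixed scheme, and check rainbow connectivity; conversely, given a $2$-rainbow-colouring, recover $X$ from the colour classes at each $w_T$ and argue coverage of every $\{u,v\}\in E(G)$ from the absence of alternative rainbow $u$–$v$ paths.
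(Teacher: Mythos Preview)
Your high-level correspondence---encoding $X(T)$ by the $2$-colouring of the edges at a gadget vertex for $T$, and reading ``$\{u,v\}$ is covered'' as ``there is a rainbow $u\text{--}w_T\text{--}v$ path''---is exactly right. But the concrete construction you propose cannot work, for a reason you yourself flag as ``the delicate point'' without actually resolving: you place $V(G)$ inside the clique of $G'$. Any two clique vertices $u,v$ are then joined by the edge $uv$ itself, and a single edge is trivially a rainbow path under \emph{every} colouring. So no colouring constraint whatsoever is imposed on pairs $\{u,v\}\in E(G)$, and the ``only if'' direction collapses. Your suggested fixes do not help: removing the clique edge $uv$ for pairs in $E(G)$ destroys the clique and hence the split structure; attaching a private degree-$2$ independent vertex to $u$ and $v$ only \emph{adds} length-$2$ paths rather than blocking the direct edge; and no choice of apex-edge colours can make the length-$1$ path $uv$ fail to be rainbow.

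The paper's construction sidesteps this by swapping the roles. It places (copies of) $V(G)$ in the \emph{independent set} of $G'$, so that $u_v$ and $u_w$ are never adjacent and every $u_v$--$u_w$ rainbow path must have length exactly $2$ through some clique vertex. The clique is then populated with the gadget vertices: one vertex $s_T$ per $T\in\S$ (adjacent to all $u_v$ with $v\in T$), one vertex $x_e$ per \emph{non}-edge $e=vw$ of $G$ (adjacent to $u_v,u_w$), and one private clique vertex $u'_v$ per $v$ (giving the pendant edge $u_vu'_v$). Now for $vw\in E(G)$ the only length-$2$ $u_v$--$u_w$ paths go through some $s_T$, forcing a covering biclique; for $vw\notin E(G)$ the vertex $x_e$ supplies a free rainbow path by colouring its two edges oppositely; and the pendants $u'_v$ handle all remaining pairs via one red pendant edge followed by one blue clique edge. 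This inversion is the missing idea in your plan.
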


\begin{proof}
%\Obr{0.25}{reduction_small.jpeg}{Graph $G' = (A', B', E')$.}{fig:reduction_old}
Let $(G,\S)$ be an instance of $\bcc$ and $G = (V,E)$.
Let $\bar E = \binom V 2 \setminus E$ be the set of edges of complement of~$G$.
We define split graph $G' = (A' \dot \cup B', E')$ in the following way (see Figure \ref{fig:reduction}):
$$A' = \bigcup_{v \in V}\{u'_v\} ~\cup~ \bigcup_{T \in \S} \{s_T\} ~\cup~ \bigcup_{e \in \bar E} \{x_e\}$$
$$B' = \bigcup_{v \in V}\{u_v\}$$
$$E' = \binom {A'} 2 \cup \bigcup_{v \in V} \{u_vu'_v\} \cup \bigcup_{v \in T \in \S} \{u_vs_T\} \cup \bigcup_{e \in \bar E;~ e = vw} \{u_vx_e, u_wx_e\}$$ 
\Obr{0.15}{biclique_to_rainbow.eps}{Graph $G' = (A' \dot \cup B', E')$.}{fig:reduction}

We prove that $rc(G') \leq 2$ if and only if $(G,\S)$ is a ``yes'' instance of $\bcc(G, \S)$. At first suppose that $(G,\S)$ is a ``yes'' instance of $\bcc$.
Let $X \colon \S \to 2^V$ be a function 
such that bi-cliques $\{(X(T), T \setminus X(T)):~ T \in \S\}$ cover all edges of $G$.
We define coloring $col \colon E' \to \{red, blue\}$ of edges of $G'$ in the following way:

\begin{itemize}
 \item $col(e') = blue$, if $e' \in \binom {A'} 2$
 \item $col(e') = red$, if $e' = u_vu'_v$ and $v \in V$
 \item $col(e') = blue$, if $e' = u_vs_T$, $T \in \S$ and $v \in X(T)$
 \item $col(e') = red$, if $e' = u_vs_T$, $T \in \S$ and $v \in T \setminus X(T)$
 \item For every $e = vw \in \bar E$ we set $col(u_vx_e)$ and $col(u_wx_e)$ in such a way that $col(u_vx_e) \neq col(u_wx_e)$.
 \end{itemize}

We will show that for this coloring there exists a rainbow path between any two vertices of $G'$. 
If $u,u' \in A'$ then $uu' \in E'$ and we are done. 
If $u_v \in B'$ and $u' \in A'$ then either $u' = u'_v$ and $u_vu' \in E'$ or we can take rainbow path $u_v, u'_v, u'$. 
If $u_v, u_w \in B'$ then either $e = vw \in \bar E$ and the path $u_v, x_e, u_w$ is rainbow or $vw \in E$, in which case there exists $T \in \S$ such that bi-clique $(X(T), T \setminus X(T))$ covers $vw$ and the path $u_v, s_T, u_w$ is rainbow.

\smallskip

For the opposite direction suppose that $col \colon E' \to \{red, blue\}$ is a coloring of edges of $G'$ such that there exists a rainbow path between any two vertices of $G'$.
We define mapping $X \colon \S \to 2^V$ in such a way that
$v \in X(T)$ if and only if $v \in T \in \S$ and $col(u_v, s_T) = blue$.
  
Suppose that $vw$ is an edge of $G$. From the definition of $G'$ we know that all paths from $u_v$ to $u_w$ in $G$ of length at most $2$ are of the form 
$u_v, s_T, u_{w}$ for some $T \in \S$. At least one of these paths has to be rainbow, say $u_v, s_T, u_{w}$. Then by the definition of $X(T)$ we know 
that bi-clique $(X(T), T \setminus X(T))$
covers edge $vw$ in $G$, what concludes the proof.
\end{proof}

\begin{lemma}
\label{lemma3SATtoBCC}
 Problem $\sat(\varphi)$ is reducible to $\bcc(G, \S)$.
\end{lemma}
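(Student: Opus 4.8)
The plan is to give a polynomial-time reduction from $\sat$ to $\bcc$, routed through the not-all-equal variant of satisfiability. Standard textbook manipulations turn a $3$-\textsc{Sat} formula $\varphi$ into a formula $\psi$ --- over the variables of $\varphi$ together with one global fresh variable and one auxiliary variable per clause --- in which every clause is a not-all-equal constraint (\textsc{Nae}) on exactly three literals, and such that $\varphi$ is satisfiable if and only if some assignment makes every \textsc{Nae}-clause of $\psi$ true. The reason for this detour is that an \textsc{Nae}-clause is insensitive to globally complementing the assignment, so the gadget below will only have to encode the \emph{partition} of the variables into the two truth classes, not which class is ``true''; an asymmetric \textsc{Or}-clause could not be handled this way.

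\emph{The construction.} Write the variables of $\psi$ as $y_1,\dots,y_t$ and its clauses as $D_1,\dots,D_q$, with $D_j$ constraining literals $\ell_{j,1},\ell_{j,2},\ell_{j,3}$. Put $V(G)=\{\,p_i,\bar p_i : i\in[t]\,\}\cup\{\,L_{j,k}: j\in[q],\ k\in\{1,2,3\}\,\}$, and let $\S=\{W\}\cup\{\,Q_j: j\in[q]\,\}$ where $W=V(G)$ is a single ``master'' set and $Q_j=\{L_{j,1},L_{j,2},L_{j,3}\}$. Give $G$ the following edges: a \emph{diagonal} $p_i\bar p_i$ for each $i$; for each $j$ and $k$ an edge from $L_{j,k}$ to $\bar p_i$ if $\ell_{j,k}=y_i$ and to $p_i$ if $\ell_{j,k}=\bar y_i$; and, for each $j$, the three edges of a triangle spanned by $Q_j$. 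This is clearly computable in time polynomial in $|\varphi|$.

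\emph{Correctness.} Each $Q_j$ consists only of $L$-vertices, which are fresh to their clause and position, so the only members of $\S$ containing a diagonal or a second-type edge is $W$, and each triangle edge of $D_j$ lies only in $W$ and $Q_j$. Hence in any solution $X$ the set $X(W)$ must separate $p_i$ from $\bar p_i$ for every $i$ --- say $y_i$ is \emph{true} when $p_i\in X(W)$ --- and must separate each $L_{j,k}$ from its designated partner, which forces $L_{j,k}\in X(W)$ exactly when $\ell_{j,k}$ is true under this assignment. For a clause $D_j$ its triangle can be covered only by $W$ or $Q_j$; but any bipartition of the three-element set $Q_j$ keeps one of the three pairs together, so $Q_j$ covers at most two of the triangle's edges and $W$ must separate some pair $L_{j,k},L_{j,k'}$, i.e.\ $\ell_{j,k}$ and $\ell_{j,k'}$ receive opposite truth values --- which is precisely NAE-satisfaction of $D_j$. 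Conversely, given any \textsc{Nae}-satisfying assignment, let $X(W)$ consist of all vertices that are ``true'' ($\bar p_i$ true iff $y_i$ false, $L_{j,k}$ true iff $\ell_{j,k}$ true): this separates every diagonal, every second-type edge, and at least two edges of each clause triangle, and the at most one remaining triangle edge of $D_j$ is separated by taking $X(Q_j)$ to be a singleton containing one of its ends. Thus $(G,\S)$ is a yes-instance of $\bcc$ iff $\psi$ is \textsc{Nae}-satisfiable iff $\varphi$ is satisfiable.

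\emph{The main obstacle.} The real difficulty is conceptual: the bipartitions $X(T)$ for distinct $T\in\S$ are a priori unrelated, so there is no local mechanism to ``wire'' a variable gadget into a clause gadget. The construction gets around this by funnelling every variable-encoding edge into one master set $W$, so that a single global bipartition carries the whole assignment and the per-clause triangles merely test it; the flip-symmetry of that bipartition is exactly what forces the preliminary passage to the \textsc{Nae} version of satisfiability. What remains --- routine, but to be done carefully --- is to check that the only members of $\S$ containing any given edge are the intended ones, and to verify the ``if'' direction of the equivalence in full detail.
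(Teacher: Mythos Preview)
Your reduction is correct and takes a genuinely different route from the paper's. You pass through \textsc{Nae}-$3$-\textsc{Sat} and exploit its symmetry under global complementation to encode the entire truth assignment in the bipartition of a single master set $W=V(G)$, with the tiny per-clause sets $Q_j$ left only to pick up the one triangle edge that $W$ may miss. The paper reduces directly from $3$-\textsc{Sat} without this detour: for every variable $v_i$ it introduces two overlapping gadget sets $V_i^1,V_i^2$ (each containing five variable-gadget vertices together with the clause vertices $A_j$ for the clauses in which $v_i$ occurs with the appropriate sign) and for every clause $C_j$ three $3$-element sets $C_j^k$; the argument hinges on many edges being ``uniquely coverable'' by exactly one $T\in\S$, which forces the bipartitions of $V_i^1$ and $V_i^2$ into one of two rigid complementary patterns that then serve as the truth value of $v_i$. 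Your construction is considerably shorter and conceptually cleaner; the paper's is self-contained (no black-box appeal to the hardness of \textsc{Nae}-\textsc{Sat}) and has the structural feature that every set in $\S$ is local --- no single $T$ spans all of $V$ and no single bipartition carries the whole assignment --- which may be desirable if one later wants finer control over the $\bcc$ instances produced.
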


\begin{proof}

Let $\varphi$ be an instance of $\sat$.
Let $v_1, \ldots, v_n$, resp. $C_1, \ldots, C_m$ be variables, resp. clauses of $\varphi$. 
Let $g \colon \{1, \ldots, m\} \times \{1,2,3\} \to \{1, \ldots, n\}$ be a function 
such that $v_{g(j,k)}$ is the variable corresponding to the $k$-th literal in clause $C_j$.
If variable $v$ has a positive, resp. negative appearance in clause $C$ then we write $v \in C$, resp. $\lnot v \in C$.
We will construct graph $G = (V,E)$ and family $\S \subseteq 2^V$  
such that $\varphi$ is satisfiable if and only if $(G,\S)$ is a ``yes'' instance of $\bcc$.

We start the construction of $G$ by defining $2$ types of vertices and $2$ types of edges (see Figure \ref{fig:E_12}):
$$ V_v = \bigcup_{i = 1, \ldots, n} \{a_i, f_i, f_i^1, f_i^2, t_i, t_i^1, t_i^2\}$$
$$ V_c = \bigcup_{j = 1, \ldots, m} \{A_j, F_j\}$$

$$ E_1 = \bigcup_{i = 1, \ldots, n} \{a_if_i, a_it_i, f_it_i^1, f_it_i^2, t_if_i^1, t_if_i^2, f_i^1t_i^1, f_i^2t_i^2\} $$
$$ E_2 = \bigcup_{j = 1, \ldots, m} \{A_jF_j\} \cup \bigcup_{j = 1, \ldots, m \atop k = 1,2,3} \{A_ja_{g(j,k)}, A_jt_{g(j,k)}, F_jt_{g(j,k)}\}$$

%\Obr{0.25}{E_1_and_E_2_small.jpeg}{Edges in sets $E_1$ and $E_2$.}{fig:E_12s}
\Obr{0.4}{E_1_E_2.eps}{Edges in sets $E_1$ and $E_2$.}{fig:E_12}

%\Obr{0.25}{E_3_small.jpeg}{Edges in set $E_3$. Dashed edges are from $E_2$.}{fig:E_3}

Note that vertices of $V_v$ correspond to variables while the vertices of $V_c$  correspond to clauses of $\varphi$. 
We define $V = V_v \cup V_c$, $E = E_1 \cup E_2$ and $G = (V,E)$.

Next we define elements of the family $\S \subseteq 2^V$. 
For every $i = 1, \ldots, n$ we define (see Figures \ref{fig:V} and \ref{fig:C}):
$$ V_i^1 = \{a_i, f_i, f_i^1, t_i, t_i^1\} ~ \cup \bigcup_{v_i \in C_j} \{A_j\}$$
$$ V_i^2 = \{a_i, f_i, f_i^2, t_i, t_i^2\} ~ \cup \bigcup_{\lnot v_i \in C_j} \{A_j\}$$

%\Obr{0.25}{sets_V(i,j)_and_C(i,j)_small.jpeg}{Sets $V_i^l$ and $C_j^k$. Depicted edges are the edges that cannot be covered in any other set $S \in \S$.}{fig:VCs}
\Obr{0.08}{V_i.eps}{Sets $V_i^1$ and $V_i^2$. By dashed lines, resp. plain lines are depicted edges, resp. uniquely coverable edges of $G$.}{fig:V}

and for every $j = 1, \ldots, m$ and $k = 1,2,3$ we define:
$$ C_j^k = \{t_{g(j,k)}, A_j, F_j\}$$

\Obr{0.13}{C_jk.eps}{Set $C_j^k$. By dashed lines, resp. plain lines are depicted edges, resp. uniquely coverable edges of $G$.}{fig:C}

We conclude the construction of $\S$ by taking 
$$\S = \bigcup_{i = 1,\ldots, n} \{V_i^1, V_i^2\} ~ \cup ~ \bigcup_{j=1,\dots,m} \{C_j^1,C_j^2,C_j^3\}$$

Note that some edges of $G$ can be covered by only one bi-clique, since there is only one $T \in \S$ containing both endpoints of the given edge.
We say that those edges are \emph{uniquely coverable} (see Figures \ref{fig:V} and \ref{fig:C}).

At first suppose that $(G,\S)$ is a ``yes'' instance for  $\bcc$. 
Let $X \colon \S \to 2^V$ be the corresponding partitioning function.
Define function $Y \colon \S \to 2^V$ such that $Y(T) = T \setminus X(T)$ for every $T \in \S$.
Without loss of generality suppose that $a_i \in X(V_i^l)$ and $A_j \in X(C_j^k)$ for every feasible indices $i,j,k$ and $l$.
Define an evaluation $eval \colon \{v_1, \ldots, v_n\} \to \{false, ~true\}$ such that $eval(v_i) = true$ if and only if $t_i \in X(V_i^1)$.
We will show that $eval$ satisfies formula $\varphi$. First we prove the following four claims:

\begin{claim}[i] 
Let $i \in \{1,\ldots,n\}$ and $l \in \{1,2\}$. Then one of the following holds:
\begin{itemize}
\item $\{f_i, f_i^l\} \subseteq X(V_i^l)$ ~and~ $\{t_i, t_i^l\} \subseteq Y(V_i^l)$
\item $\{t_i, t_i^l\} \subseteq X(V_i^l)$ ~and~ $\{f_i, f_i^l\} \subseteq Y(V_i^l)$
\end{itemize}
\end{claim}

\begin{claim}[ii] 
Let $i \in \{1,\ldots,n\}$, $l \in \{1,2\}$ and let $A_j \in V_i^l$. Then $A_j \in Y(V_i^l)$.
\end{claim}

\begin{claim}[iii] 
Let $j \in \{1,\ldots,m\}$ and $k \in \{1,2,3\}$. Then one of the following holds:
\begin{itemize}
\item $F_j \in X(C_j^k)$ ~and~ $t_{g(j,k)} \in Y(C_j^k)$
\item $t_{g(j,k)} \in X(C_j^k)$ ~and~ $F_j \in Y(C_j^k)$
\end{itemize}
\end{claim}

Proof of Claims (i), (ii) and (iii) follow directly from the fact that corresponding edges are uniquely coverable (see Figures \ref{fig:V} and \ref{fig:C}).

\begin{claim}[iv]
Let $i \in \{1,\ldots,n\}$. Then one of the following holds:\\
\centerline{$\bullet$~ $\{a_i, f_i\} \subseteq X(V_i^1)$, $t_i \in Y(V_i^1)$, $\{a_i, t_i\} \subseteq X(V_i^2)$ ~and~ $f_i \in Y(V_i^2)$}\\
\centerline{$\bullet$~ $\{a_i, t_i\} \subseteq X(V_i^1)$, $f_i \in Y(V_i^1)$, $\{a_i, f_i\} \subseteq X(V_i^2)$ ~and~ $t_i \in Y(V_i^2)$}
\end{claim}

Edges $a_if_i$ and $a_it_i \in E(G)$ can only be covered by bi-cliques on sets $V_i^1$ and $V_i^2$.
Note that from Claim (i) follows that it is not possible to cover both edges $a_if_i$ and $a_it_i$ by the same bi-clique.
We also know that $a_i \in X(V_i^1)$ and $a_i \in X(V_i^2)$. 
It means that either $\{a_i, f_i\} \subseteq X(V_i^1)$ or $\{a_i, t_i\} \subseteq X(V_i^1)$ and these two cases correspond to the two cases of our Claim (iv).

Now we will show that using $eval$ there exists at least one positively evaluated literal in every clause of formula $\varphi$. 
Let $C_j$ be a clause of $\varphi$. 
We know that edge $A_jF_j \in E(G)$ is covered by some bi-clique. 
This bi-clique has to be on vertex set $C_j^1$, $C_j^2$ or $C_j^3$ (since those are the only sets of $\S$ containing both $A_j$ and $F_j$).
Suppose that $A_j \in X(C_j^k)$ and $F_j \in Y(C_j^k)$. 
By Claim (iii) we know that edge $A_jt_{g(j,k)} \in E(G)$ is not covered by the bi-clique on $C_j^k$ and has to be covered by the bi-clique on $V_i^1$ or $V_i^2$.
We will show that the $k$-th literal of clause $C_j$ is satisfied using the evaluation $eval$.

Let $i = g(j,k)$. 
If $v_i \in C_j$ (positive appearance of variable $v_i$) then the edge $A_jt_i \in E(G)$ has to be covered 
by bi-clique $(X(V_i^1), Y(V_i^1))$.
That is only possible if $t_i \in X(V_i^1)$ (by Claim (ii) we know that $A_j \in Y(V_i^1)$). 
It means that by the~definition $eval(v_i) = true$ and $k$-th literal of $C_j$ is satisfied.

If $\lnot v_i \in C_j$ then the edge $A_jt_i$ has to be covered by bi-clique $(X(V_i^2), Y(V_i^2))$.
That is only possible if $t_i \in X(V_i^2)$ (by Claim (ii)).
From Claim (iv) we have $t_i \notin X(V_i^1)$ what means that $eval(v_i) = false$.
Using the fact that $v_i$ has a negative appearance in $C_j$ we know that $k$-th literal of $C_j$ is satisfied.

\smallskip

In the rest of the proof suppose that $eval \colon \{v_1, \ldots, v_n\} \to \{false, true\}$ is a satisfying evaluation of formula $\varphi$.
We will show that $(G,\S)$ is a ''yes`` instance of $\bcc$.

For every $i \in \{1,\ldots, n\}$ we define:
\begin{itemize}
 \item if $eval(v_i) = true$: $X(V_i^1) = \{a_i,t_i,t_i^1\}$ and $X(V_i^2) = \{a_i,f_i,f_i^2\}$
 \item if $eval(v_i) = false$: $X(V_i^1) = \{a_i,f_i,f_i^1\}$ and $X(V_i^2) = \{a_i,t_i,t_i^2\}$
\end{itemize}

%\Obr{0.25}{partition_of_V_and_C_small.jpeg}{An example of partitions of sets $V_i^1$ and $V_i^2$ in case of $eval(v_i) = true$ and set $C_j^k$ in case that $k$-th literal of $c_j$ is evaluated to $true$.}{fig:partitions}

and for every $j \in \{1, \dots, m\}$ and $k \in \{1,2,3\}$ we define:
\begin{itemize}
 \item if the $k$-th literal of $C_j$ is evaluated to $true$: $X(C_j^k) = \{A_j, t_{g(j,k)}\}$
 \item if the $k$-th literal of $C_j$ is evaluated to $false$: $X(C_j^k) = \{A_j, F_j\}$
\end{itemize}

Note that for every $T \in \S$ we define $Y(T) = T \setminus X(T)$.
We will show that all edges of $G$ are covered by some bi-clique $(X(T), Y(T))$. 
From the definition of $eval$ we know that edges of $E_1$ corresponding to variable $v_i$ are always covered by bi-cliques on $V_i^1$ and $V_i^2$.
By definition these bi-cliques also cover all edges $A_ja_{g(j,k)}$. 

Edges $t_{g(j,k)}F_j$ are covered by bi-cliques on $C_j^k$.
To conclude the proof we need to prove that also all  edges $A_jF_j$ and $A_jt_{g(j,k)}$ are covered by some bi-cliques.

From our assumption we know that every clause $C_j$ contains at least one positively evaluated literal. 
For this literal we have $X(C_j^k) = \{A_j, t_{g(j,k)}\}$. It means that the edge $A_jF_j$ is covered by bi-clique $(X(C_j^k), Y(C_j^k))$.

Edge $A_jt_{g(j,k)}$ is covered by bi-clique $(X(C_j^k), Y(C_j^k))$ whenever the $k$-th literal of $C_j$ is evaluated to false by $eval$.
Suppose that corresponding literal is evaluated to true.
Let $i = g(j,k)$. If $v_i \in C_j$, resp. $\lnot v_i \in C_j$  then $eval(v_i) = true$, resp. $eval(v_i) = false$ what implies that 
edge $A_jt_{g(j,k)}$ is covered by bi-clique $(X(V_i^1), Y(V_i^1))$, resp. $(X(V_i^2), Y(V_i^2))$.
\end{proof}

Next Theorem merges the results of Lemmata \ref{lemmaBCCtoRC} and \ref{lemma3SATtoBCC}.

\begin{theorem}
\label{thm:NP}
The problem $\rc(G,2)$ is \NP-complete even when $G$ is restricted to be a split graph.
\end{theorem}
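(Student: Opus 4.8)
The plan is to obtain Theorem~\ref{thm:NP} by chaining the two reductions already established, together with a routine membership argument; essentially all of the difficulty has been pushed into Lemmata~\ref{lemmaBCCtoRC} and~\ref{lemma3SATtoBCC}.

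First I would argue that $\rc(G,2)$ lies in \NP. A purported solution is an edge colouring $col : E(G) \to \{red,blue\}$, a certificate of size $O(m)$. Since under a $2$-colouring every rainbow path has length at most $2$, verification is easy: for each of the $O(n^2)$ pairs $\{u,v\}$, check whether $uv \in E(G)$ or whether some common neighbour $w$ satisfies $col(uw) \neq col(wv)$; this runs in polynomial time. The same bound applies verbatim when $G$ is restricted to be a split graph, so the restricted problem is in \NP as well.

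For \NP-hardness I would invoke the fact, already noted before Lemma~\ref{lemmaBCCtoRC}, that $\sat(\varphi)$ is \NP-complete (general \textsc{Sat} reduces to it easily). Hence it suffices to exhibit a polynomial-time many-one reduction from $\sat$ to the restriction of $\rc(\cdot,2)$ to split graphs, and composing the reduction of Lemma~\ref{lemma3SATtoBCC} (from $\sat(\varphi)$ to $\bcc(G,\S)$) with that of Lemma~\ref{lemmaBCCtoRC} (from $\bcc(G,\S)$ to $\rc(G',2)$ with $G'$ a split graph) yields exactly such a map $\varphi \mapsto G'$. I would then check that the composition is polynomial: in Lemma~\ref{lemma3SATtoBCC} the graph $G$ and the family $\S$ have size $O(n+m)$ and are built in linear time from $\varphi$; feeding $(G,\S)$ into Lemma~\ref{lemmaBCCtoRC} produces a split graph $G'$ on $|V|+|\S|+|\bar E|$ vertices with $O(|V(G')|^2)$ edges, again computable in polynomial time. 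Correctness of the composition is immediate because each lemma states an ``if and only if'': by Lemma~\ref{lemma3SATtoBCC}, $\varphi$ is satisfiable iff $(G,\S)$ is a ``yes'' instance of $\bcc$, and by Lemma~\ref{lemmaBCCtoRC} the latter holds iff $rc(G') \leq 2$; so $\varphi$ is satisfiable iff $rc(G') \leq 2$.

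Putting the two halves together gives that $\rc(G,2)$ restricted to split graphs is both in \NP\ and \NP-hard, hence \NP-complete. I do not anticipate any genuine obstacle at this stage: the only points requiring care are confirming that the two ``if and only if'' equivalences compose cleanly (they do) and the routine bookkeeping that both reductions run in polynomial time, which the excerpt already remarks upon in the paragraph preceding Lemma~\ref{lemmaBCCtoRC}.
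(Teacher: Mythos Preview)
Your proposal is correct and matches the paper's approach exactly: the paper treats Theorem~\ref{thm:NP} as an immediate consequence of Lemmata~\ref{lemmaBCCtoRC} and~\ref{lemma3SATtoBCC}, noting only that the three problems clearly lie in \NP\ and that the reductions are easily seen to be polynomial. Your write-up merely spells out those two routine points (membership via checking length-$2$ rainbow paths, and the polynomial size of the intermediate and final instances) in more detail than the paper bothers to.
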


% \begin{proof}
% NP-completeness of {\sc $2$-Rainbow Coloring} follows directly from Lemmas \ref{lemmaBCCtoRC} and \ref{lemma3SATtoBCC}.
% \end{proof}
% 

Continuing with the notations introduced in the reduction from $\bcc(G, \S)$ to $\rc(G', 2)$ (Lemma \ref{lemmaBCCtoRC}), consider the bipartite graph $H(A'' \dot \cup B', E'')$ defined as follows:

$$A'' = \bigcup_{T \in \S} \{s_T\} ~\cup~ \bigcup_{e \in \bar E} \{x_e\}$$
$$B' = \bigcup_{v \in V}\{u_v\}$$
$$E'' = \bigcup_{v \in T \in \S} \{u_vs_T\} \cup \bigcup_{e \in \bar E;~ e = vw} \{u_vx_e, u_wx_e\}$$ 

It is easy to see that the same proof can be modified to show that the following problem is \NP-complete.

\begin{problem}[$\textsc{BipartiteRainbow}(H)$]
\label{problemBipartiteRainbow}
Given a bipartite graph $H$ with parts $A$ and $B$, decide whether the edges of $H$ can be $2$-coloured so that there exists a rainbow path between any two vertices in part $B$.
\end{problem}

The above problem is equivalent to Problem \ref{problemMatrix} ($\textsc{EnsureDistinctRows}(C)$) where $n$ is the size of part $A$, $m$ is the size of part $B$ and $C$ corresponds to the missing edges of $H$ across the bipartition. To see that Problem \ref{problemBipartiteRainbow} is equivalent to Problem \ref{problemPacking} ($\textsc{OrthogonalPacking}$), Fix an ordering of $(a_1, \ldots, a_n)$ of the vertices in $A$ and associate with each vertex $v \in B$ an $n$-dimensional box $b(v)$ of sides $(s_1, \ldots, s_n)$ where $s_i = 1/2$ if $va_i \in E(H)$ and $s_i = 1$ otherwise. Now a $\{red,~blue\}$ rainbow colouring of edges $H$ can be interpreted as the location of the ``left-bottom'' corner of the boxes in a packing of them into the unit cube. In particular, a box $b(v)$ of size $(s_1, \ldots, s_n)$ occupies the space $[x_1, x_1 + s_1] \times \cdots \times [x_n, x_n + s_n]$, where $x_i = 1/2$ if $va_i \in E(H)$ with colour $red$ and $x_i = 0$ otherwise. 

%%%%%%%%%%%% Section on Two colours ends

\end{document}